\def\eqref#1{equation~\ref{#1}}
\def\1{\bm{1}}
\DeclareMathAlphabet{\mathsfit}{\encodingdefault}{\sfdefault}{m}{sl}
\SetMathAlphabet{\mathsfit}{bold}{\encodingdefault}{\sfdefault}{bx}{n}
\newtheorem{proposition}{\bf Proposition}[section]
\newtheorem{lemma}{\bf Lemma}[section]
\newcommand{\bdEDIT}[1]{\textcolor{black}{#1}}
\newcommand{\kgEDIT}[1]{\textcolor{black}{#1}}
\newcommand{\RR}{\mathds{R}}
\newcommand{\rUnit}{\frac{\bf r}{\left|{\bf r}\right|}}
\newcommand{\bUnit}{\frac{\hat{\bf b}}{|\hat{\bf b}|}}
\newcommand{\riBUnit}{\frac{{\bf r}_{ibi}}{\left|{\bf r}_{ibi}\right|}}
\newcommand{\rOneBUnit}{\frac{{\bf r}_{1b1}}{\left|{\bf r}_{1b1}\right|}}
\newcommand{\rOneBTwoUnit}{\frac{{\bf r}_{1b2}}{\left|{\bf r}_{1b2}\right|}}
\newcommand{\rOneBiUnit}{\frac{{\bf r}_{1bi}}{\left|{\bf r}_{1bi}\right|}}
\newcommand{\rTwoBUnit}{\frac{{\bf r}_{2b2}}{\left|{\bf r}_{2b2}\right|}}
\newcommand{\rijUnit}{\frac{{\bf r}_{ij}}{\left|{\bf r}_{ij}\right|}}
\DeclareMathOperator{\sgn}{sgn}
\title{Beacon-referenced Pursuit for Collective Motions in Three Dimensions}
\author{Kevin S.~Galloway
\\
Department of Electrical and Computer Engineering\\
United States Naval Academy\\
Annapolis, MD 21402, USA.\\
\texttt{kgallowa@usna.edu}
\And
Biswadip Dey\thanks{This work was conceptualized during Biswadip's previous appointment at the Department of Mechanical and Aerospace Engineering at Princeton University, Princeton, NJ 08540, USA.}\\
Predictive Analytics Research Group\\
Siemens Corporation, Corporate Technology \\
Princeton, NJ 08540, USA \\
\texttt{biswadip.dey@siemens.com}
}
\begin{document}

\maketitle

\begin{abstract}
Motivated by real-world applications of unmanned aerial vehicles, this paper introduces a decentralized control mechanism to guide steering control of autonomous agents maneuvering in the vicinity of multiple moving entities (e.g. other autonomous agents) and stationary entities (e.g. fixed beacons or points of references) in a three-dimensional environment. The proposed control law, which can be perceived as a modification of the three-dimensional constant bearing (CB) pursuit law, provides a means to allocate simultaneous attention to multiple entities. We investigate the behavior of the closed-loop dynamics for a system with one agent referencing two beacons, as well as a two-agent \emph{mutual pursuit} system wherein each agent employs the beacon-referenced CB pursuit law with regards to the other agent and a stationary beacon. Under certain assumptions on the associated control parameters, we demonstrate that this problem admits \emph{circling equilibria} with agents moving on circular orbits with a common radius, in planes perpendicular to a common axis passing through the beacons. As the common radius and distances from the beacon are determined by the choice of parameters in the pursuit law, this approach provides a means to engineer desired formations in a 3-dimensional setting.
\end{abstract}
\section{Introduction}
\label{sec:Intro}
As pursuit and collective motion play significant roles in various contexts of robotics and engineering, it seems appealing to seek inspiration from nature, which abounds with many such examples. Among the various possible ways to pursue and intercept a moving target, evidence of constant bearing (CB) pursuit strategy can be observed in a variety of animal species, such as flies \citep{Collett1975,Osorio281}, dogs \citep{Shaffer_Dog_2004}, raptors \citep{Tucker3745}, and humans \citep{CHARDENON200213}. While pursuing a target using CB pursuit strategy, the pursuer (i.e. the following agent) moves in such a way that the angle between its own velocity and the line-of-sight to the target (i.e. the baseline) remains constant. By prescribing a fixed offset between the baseline and the pursuer's velocity, this strategy provides a generalization of the classical pursuit strategy (wherein the pursuer's velocity is always aligned with the line-of-sight). While pursuit strategies are often employed in contexts with a single pursuer-target pair, recent work \citep{Marshall_TAC_04,Sinha_GenLCP,Pavone_ASME_commonOffset} has demonstrated that the CB pursuit strategy can be employed as a building block for designing coordinated maneuvers in a collective of agents by implementing the strategy in a cyclic manner (i.e. agent $i$ pursues agent $i+1$, with the last agent pursuing the first). In the planar setting, \cite{Kevin_2011_CDC, Galloway_PRS_13} demonstrated the existence and stability of a rich class of behaviors such as circular motion, rectilinear motion, shape preserving spirals and periodic orbits. In a related body of work, \cite{CP_3D_Kevin_2010,Galloway_PRS_16} defined a CB pursuit strategy and associated control law for the 3-dimensional setting and determined conditions for existence of a similar class of motions. While this line of research has demonstrated existence of circling equilibria in which agents moved on a common circular trajectory, both the location of the circumcenter of the formation (with respect to some inertial frame) and its size were determined by initial conditions rather than control parameters. To overcome this aspect and to broaden the scope from a design perspective, the work by \cite{KSG_BD_ACC_2015, KSG_BD_ACC_2016, KSG_BD_Automatica_18} introduced a modified version of the CB pursuit law (in the planar setting), wherein the pursuer pays attention to its neighbor as well as a stationary beacon. The beacon can represent an attractive food source in a biological setting, or some target of interest for an unmanned vehicle. (See also the work by \cite{Mallik2015ConsensusApplications} and \cite{Daingade2015AImplementation} for a related, but different, control formulation.) 

In the current work, we extend the beacon-referenced approach to a 3-dimensional setting and explore the possible equilibrium formations for two agents in beacon-referenced mutual pursuit. More specifically, we first state a modified version of the beacon-referenced 3D pursuit law introduced by \cite{KSG_BD_ACC_2018} which is based only on relative bearing measurements to two targets (typically a fixed beacon and a moving neighbor agent). This modified version of the control law is easier to implement since it does not require any estimate of optic flow or relative velocity, and it is more tractable for mathematical analysis. We then consider the corresponding dynamics for a mutual pursuit system in which two agents apply this pursuit law with respect to each other and to a (possibly distinct) fixed beacon. Earlier work \citep{Matteo_IFAC_3D, MISCHIATI2012894, UH_BD_ICRA_15} has demonstrated that mutual pursuit can lead to a variety of interesting motion patterns, while providing better tractability from a nonlinear analysis perspective, and it can be viewed as a building block towards understanding the more general cyclic pursuit framework. The main contribution of this work is to then derive existence conditions (and a mathematical characterization) for circling equilibria under the various possible combinations of 1-2 mobile agents interacting with 1-2 stationary beacons. 

This paper is organized as follows. We begin by describing the self-steering particle model for agents moving in three dimensions. Then, in the latter part of Section~\ref{sec:Prob_SetUp}, we introduce the 3D beacon-referenced constant bearing pursuit law. In the spirit of beginning with the configuration of least complexity, in Section \ref{sec:oneAgentTwoBeacons} we consider the case in which a single mobile agent employs the beacon-referenced control law with respect to two fixed beacons, and we characterize the existence of circling equilibria. Before proceeding to configurations involving two mobile agents, in Section \ref{sec:ShapeVarDescription} we describe a reduction to an \emph{effective shape space} which can be parametrized by corresponding scalar shape variables, and we state the closed-loop mutual pursuit dynamics in terms of these shape variables. Section \ref{sec:twoAgentsOneBeacon} addresses the case where two agents reference the same beacon, which is similar in spirit to the presentation by \cite{KSG_BD_ACC_2018} but incorporates the modified control law. Finally, in Section \ref{sec:twoAgentsTwoBeacons} we consider the most general case in which two agents employ the beacon-referenced pursuit law with regards to two distinct beacons. In the course of our analysis, we uncover the existence of a variety of coordinated 3D circling motions (as depicted in figures such as Figure~\ref{fig:TwoBeaconStackedCircle} and Figure \ref{fig:TwoBeaconCrossStackCircle}) for which pertinent attributes such as circling radius, angular separation of the agents, and vertical separation of the agents are all prescribed by control parameters rather than initial conditions. Since these control laws require only bearing measurements relative to the agent's forward velocity vector, we hypothesize that they could provide effective methods for station-keeping in a variety of contexts such as space exploration.

\section{Modeling Agent Dynamics and Control}
\label{sec:Prob_SetUp}
\subsection{Generative Model: Agents as Self-Steering Particles}
%
Similar to earlier works \citep{Justh_PSK_3Dformation, CP_3D_Kevin_2010}, we treat the agents as unit-mass self-steering particles moving along twice-differentiable paths in a 3-dimensional environment. This allows us to describe the motion of an agent in terms of its natural Frenet frame \citep{Nat_Frenet_Bishop}, defined by its position $\mathbf{r}_i \in \RR^3$ (with respect to an inertial reference frame) and an orthonormal triad of vectors $[\mathbf{x}_i,\mathbf{y}_i,\mathbf{z}_i]$. Then, by constraining the agents to move at a common fixed and nonvanishing speed, we can assume without loss of generality that the agents are moving at unit speed, and express the dynamics of a pair of agents as
\begin{equation}
\begin{aligned}
& 
\dot{\mathbf{r}}_i = \mathbf{x}_i \\
& 
\dot{\mathbf{x}}_i = u_i \mathbf{y}_i + v_i \mathbf{z}_i, \quad
\dot{\mathbf{y}}_i = - u_i \mathbf{x}_i, \quad
\dot{\mathbf{z}}_i = - v_i \mathbf{x}_i, 
\end{aligned} 
\label{Explicit_MODEL}
\end{equation}
for $i=1,2$. Here, $u_i$ and $v_i$ are the natural curvatures viewed as gyroscopic steering controls. We also define two (possibly colocated) fixed beacons ${\bf r}_{b1}$ and ${\bf r}_{b2}$, which can be assumed (without loss of generality) to be located at ${\bf r}_{b1}=(0,0,-b)$ and ${\bf r}_{b2}=(0,0,b)$ for some $b \geq 0$. The vector from beacon 1 to beacon 2 will be denoted as $\hat{\bf b} = \mathbf{r}_{b2}-\mathbf{r}_{b1} = (0,0,2b)$. We also denote the vectors $\mathbf{r} = \mathbf{r}_{1}-\mathbf{r}_{2}$,  $\mathbf{r}_{1b1} = \mathbf{r}_{1}-\mathbf{r}_{b1}$, and $\mathbf{r}_{2b2} = \mathbf{r}_{2}-\mathbf{r}_{b2}$ to express the relative configuration of the agents and the beacons, noting the resulting constraint
\begin{equation}
\label{eqn:closureConstraint}
    \hat{\bf b} +  \mathbf{r}_{2b2} + \mathbf{r} - \mathbf{r}_{1b1} = \mathbf{0}.
\end{equation}
These vector relationships are depicted in Figure \ref{fig:AgentVectorDiagram}. (Note that only the $\mathbf{x}_i$ vectors are depicted rather than the entire frame.)
%
%
%
%
\begin{figure}[bh!]
\begin{center}
  \includegraphics[height=2in]{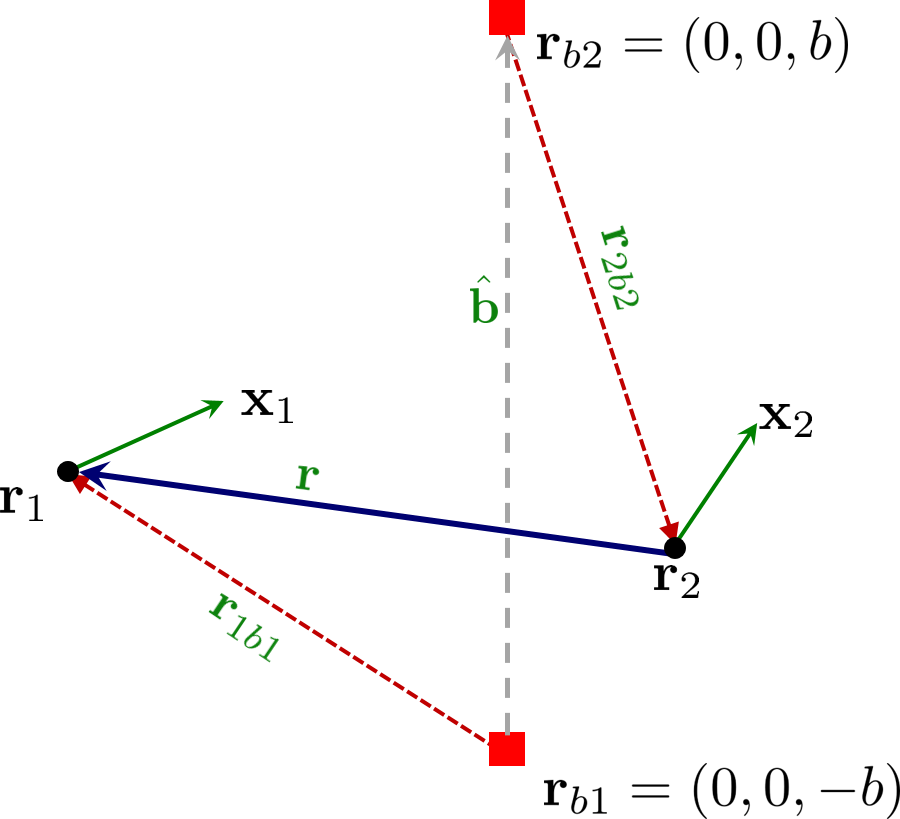}
  \caption{Two mobile agents ${\bf r}_1$ and ${\bf r}_2$ maneuvering in the vicinity of fixed beacons ${\bf r}_{b1}$ and ${\bf r}_{b2}$.}
  \vspace{-1.5em}
  \label{fig:AgentVectorDiagram}
\end{center}
\end{figure}
%
%
%
\subsection{Beacon-referenced CB Pursuit in Three Dimensions}
%
Previous work \citep{CP_3D_Kevin_2010,Galloway_PRS_16} introduced and analyzed a control law for executing the constant bearing (CB) pursuit strategy in three dimensions. Making use of the notation $\mathbf{r}_{ij} = \mathbf{r}_{i} - \mathbf{r}_{j}$ and letting $a_i \in [-1,1]$ be selected as the CB pursuit parameter for agent $i$, we say agent $i$ has attained the \textit{CB($a_i$) pursuit strategy} with respect to agent $j$ if ${\bf x}_{i} \cdot \rijUnit  = a_i$. The CB($a_i$) pursuit strategy can be thought of as prescribing a desired angle between the pursuer's velocity vector and the bearing vector to the desired target\footnote{We note that the bearing vector $\mathbf{r}_{ij}$ actually points \textit{towards} the pursuer, and therefore if $\alpha_i$ represents the desired CB pursuit angle, the corresponding CB($a_i$) parameter is given by $a_i = -\cos(\alpha_i)$.}. Then \emph{distance} from this pursuit state can be described by the CB cost function $\Lambda_{i} = \frac{1}{2}\biggl[\Bigl({\bf x}_{i} \cdot \rijUnit\Bigr) - a_i\biggr]^{2} \in [0,2]$, and a feedback law introduced by \cite{CP_3D_Kevin_2010,Galloway_PRS_16} was shown to enforce the property that $\dot{\Lambda}_i \leq 0$ with $\dot{\Lambda}_i = 0$ if and only if $\Lambda_i = 0$, (i.e. the control law renders the CB pursuit state attractive and positively invariant). The control law used by \cite{CP_3D_Kevin_2010,Galloway_PRS_16} is made up of one term referencing the bearing to the pursuee, and one term related to the relative velocity between the pursuer and pursuee. If we leave off the latter term, then the resulting simplified CB pursuit law, given by 
\begin{align}
\label{eqn:simplifiedCBLaw}
u_i = -\mu \left[\left({\bf x}_{i} \cdot \rijUnit\right) - a_i\right] \left({\bf y}_{i} \cdot \rijUnit\right)
; \quad
v_i = -\mu  \left[\left({\bf x}_{i} \cdot \rijUnit\right) - a_i\right] \left({\bf z}_{i} \cdot \rijUnit\right)
\end{align}
lacks the invariance property but performs reasonably well in numerical simulations. (Here $\mu > 0$ is a control gain.) 

The simplified CB pursuit law given by \ref{eqn:simplifiedCBLaw} serves as our building block for developing a beacon-referenced CB pursuit law which uses only bearing measurements and which references two targets rather than only one. (For most of this work we will consider the case where one target is a maneuvering agent and the other target is a fixed beacon, but section \ref{sec:oneAgentTwoBeacons} addresses the case where a single mobile agent maneuvers with reference to two fixed beacons.)  Letting $\lambda \in [0,1]$ represent relative attention/priority between the targets, and $a_{bi} \in [-1,1]$ represent the ``CB-like" control parameter with reference to beacon $i$, we can express our beacon-referenced control law as
\begin{align}
\label{eqn:beaconReferencedCBLaw}
u_i &= -\mu (1-\lambda) \left[\left({\bf x}_{i} \cdot \rijUnit\right) - a_i\right] \left({\bf y}_{i} \cdot \rijUnit\right) 
-\mu \lambda \left[\left({\bf x}_{i} \cdot \riBUnit\right) - a_{bi}\right] \left({\bf y}_{i} \cdot \riBUnit\right)  \nonumber \\
v_i &= -\mu  (1-\lambda) \left[\left({\bf x}_{i} \cdot \rijUnit\right) - a_i\right] \left({\bf z}_{i} \cdot \rijUnit\right)   
-\mu \lambda \left[\left({\bf x}_{i} \cdot \riBUnit\right) - a_{bi}\right] \left({\bf z}_{i} \cdot \riBUnit\right)
\end{align}

Note that the first term in each equation references the bearing to the other agent, and the second term references the bearing to the beacon\footnote{\kgEDIT{Note that the form of our control law can easily be extended to an arbitrary number of targets, with attentional weights that add up to $1$. Implementation of such a scheme would be limited only by sensor availability and computational power. An analysis of this generalized ``target-referenced'' control law will be the subject of future research.}}. In general, the neighbor-tracking goal may conflict with the beacon-referencing goal, i.e. there are no guarantees that both goals can be attained. Also, since $\lambda = 0$ and $\lambda = 1$ correspond to extreme cases of either exclusive tracking of the other agent or exclusive tracking of the beacon, we will assume $\lambda \in (0,1)$ in this work.


\noindent
{\bf Remark:} Our previous work \citep{KSG_BD_ACC_2018} introduced a beacon-referenced CB law similar to \eqref{eqn:beaconReferencedCBLaw} which used the full version of the CB pursuit law \citep{CP_3D_Kevin_2010,Galloway_PRS_16} with an added beacon-referencing term. In this work we have opted for the version given in \eqref{eqn:beaconReferencedCBLaw} for the following reasons. First, the version given in \eqref{eqn:beaconReferencedCBLaw} is simpler for an autonomous agent to implement in practice because it requires only bearing measurement. In particular, no range measurements or velocity estimations are required. Second, the version given in \eqref{eqn:beaconReferencedCBLaw} leads to a more tractable mathematical analysis. Lastly, simulations suggest that steady-state behaviors under \eqref{eqn:beaconReferencedCBLaw} and the control law given by \cite{KSG_BD_ACC_2018} are qualitatively similar.


In what follows, we will consider the various combinations presented by systems of 1-2 mobile agents employing the beacon-referenced control law \eqref{eqn:beaconReferencedCBLaw} with respect to 1-2 fixed beacons. For each case, we will analyze the closed-loop dynamics to determine existence conditions and steady-state characterization of relative equilibria (which correspond to circling motions). Our approach is to start with the lowest level of complexity in terms of the dimensionality of the system dynamics, and then progress to increasing levels of complexity. More specifically, we will consider the following three configurations (in order):
\begin{itemize}
    \item One mobile agent with two fixed beacons (Section \ref{sec:oneAgentTwoBeacons});
    \item Two mobile agents with one fixed beacon (Section \ref{sec:twoAgentsOneBeacon});
    \item Two mobile agents with two distinct fixed beacons (Section \ref{sec:twoAgentsTwoBeacons});
\end{itemize}

%
%
%
\section{Configuration I: One Agent With Two Beacons}
\label{sec:oneAgentTwoBeacons}
%
We start by considering the case of a single agent employing the beacon-referenced control law \eqref{eqn:beaconReferencedCBLaw} with respect to two fixed beacons. To this end, we let $a_{bi} \in [-1,1]$ represent the control parameter with reference to beacon $i$ (for $i=1,2$), and denote
\begin{align}
 &\bar{x}_{1bi} \triangleq {\bf x}_1 \cdot \rOneBiUnit, \quad 
\bar{y}_{1bi} \triangleq {\bf y}_i \cdot \rOneBiUnit, \quad
\bar{z}_{1bi} \triangleq {\bf z}_i \cdot \rOneBiUnit, \; i=1,2
\end{align}
so that we can express a two-beacon version of \eqref{eqn:beaconReferencedCBLaw} as
\begin{align}
\label{eqn:beaconReferencedCBLawForTwoBeacons}
u_1 &= -\mu (1-\lambda) \left(\bar{x}_{1b2} - a_{b2}\right) \bar{y}_{1b2}
-\mu \lambda \left(\bar{x}_{1b1} - a_{b1}\right) \bar{y}_{1b1} \nonumber \\
v_1 &= -\mu (1-\lambda) \left(\bar{x}_{1b2} - a_{b2}\right) \bar{y}_{1b2}
-\mu \lambda \left(\bar{x}_{1b1} - a_{b1}\right) \bar{z}_{1b1}.
\end{align}
Since $\mathbf{x}_1$, $\mathbf{y}_1$, and $\mathbf{z}_1$ make up an orthonormal frame, we have the vector decomposition
\begin{equation}
    \rOneBiUnit = \bar{x}_{1bi} {\bf x}_1 + \bar{y}_{1bi} {\bf y}_1 + \bar{z}_{1bi} {\bf z}_1, \; i=1,2,
\end{equation}
and therefore substituting \eqref{eqn:beaconReferencedCBLawForTwoBeacons} into the first two equations from \eqref{Explicit_MODEL} yields
\begin{align}
\label{eqn:closedLoopEqnsOneAgentTwoBeacons}
\dot{\bf r}_1 &= \mathbf{x}_1, \nonumber \\
\dot{\bf x}_1 &= u_1 {\bf y}_1 +v_1 {\bf z}_1 \nonumber \\
&= 
-(1-\lambda)\mu (\bar{x}_{1b2} - a_{b2})\left(\bar{y}_{1b2}{\bf y}_1+\bar{z}_{1b2}{\bf z}_1\right)- \lambda \mu (\bar{x}_{1b1} - a_{b1})\left(\bar{y}_{1b1}{\bf y}_1+\bar{z}_{1b1}{\bf z}_1\right) \nonumber \\
&= 
-(1-\lambda)\mu (\bar{x}_{1b2} - a_{b2})\left(\rOneBTwoUnit - \bar{x}_{1b2}{\bf x}_1\right)- \lambda \mu (\bar{x}_{1b1} - a_{b1})\left(\rOneBUnit - \bar{x}_{1b1}{\bf x}_1\right).
\end{align}
Note that \eqref{eqn:closedLoopEqnsOneAgentTwoBeacons} is a self-contained set of dynamics \bdEDIT{and the complete $[\mathbf{x}_1,\mathbf{y}_1,\mathbf{z}_1]$ frame can be reconstructed from the evolution of this reduced dynamics \citep{Justh_PSK_3Dformation}}. If we restrict our analysis away from states corresponding to collocation of the agent with the beacons, we can denote
\begin{equation}
    \rho_{1b1} \triangleq \left|{\bf r}_{1b1}\right| > 0, \quad \rho_{1b2} \triangleq \left|{\bf r}_{1b2}\right| >0.
\end{equation}
\kgEDIT{Then, as is demonstrated in the electronic supplementary material, the corresponding shape dynamics (i.e. the relative configuration of the agent with respect to the beacons) can be expressed as}
\begin{align}
\label{eqn:shapeDynamicsForOneAgentTwoBeacons}
\dot{\bar{x}}_{1b1} 
&= -(1-\lambda)(\mu (\bar{x}_{1b2} - a_{b2}))\left(\rOneBUnit \cdot \rOneBTwoUnit - \bar{x}_{1b1}\bar{x}_{1b2} \right)  -\left(\lambda\mu(\bar{x}_{1b1}-a_{b1}) - \frac{1}{\rho_{1b1}}\right)\Bigl(1 - \bar{x}_{1b1}^2 \Bigr), \nonumber \\
\dot{\bar{x}}_{1b2} 
&= -\lambda\mu (\bar{x}_{1b1} - a_{b1}))\left(\rOneBUnit \cdot \rOneBTwoUnit - \bar{x}_{1b1}\bar{x}_{1b2} \right)  -\left((1-\lambda)\mu(\bar{x}_{1b2}-a_{b2}) - \frac{1}{\rho_{1b2}}\right)\Bigl(1 - \bar{x}_{1b2}^2 \Bigr), \nonumber \\
\dot{\rho}_{1b1} 
&= \bar{x}_{1b1} \nonumber \\
\dot{\rho}_{1b2} 
&= \bar{x}_{1b2},
\end{align}
where \eqref{eqn:closureConstraint} can be used to derive the Law of Cosines relationship
\begin{equation}
\label{eqn:lawOfCosinesOneAgentTwoBeacons}
  \rOneBUnit \cdot \rOneBTwoUnit = \frac{\rho_{1b1}^2 + \rho_{1b2}^2 -(2b)^2}{2\rho_{1b1}\rho_{1b2}}.
\end{equation}
Since the dot product of unit vectors must take values in $[-1,1]$, and the extreme values are not an option in this case because they correspond to configurations in which the agent and both beacons are collinear, \eqref{eqn:lawOfCosinesOneAgentTwoBeacons} implies that 
\begin{equation}
    -2\rho_{1b1}\rho_{1b2} < \rho_{1b1}^2 + \rho_{1b2}^2 -(2b)^2 < 2\rho_{1b1}\rho_{1b2},
\end{equation}
i.e. 
\begin{equation}
\label{eqn:rhoConstraintsOneAgentTwoBeacons}
\rho_{1b1} + \rho_{1b2} > 2b, \quad |\rho_{1b1} - \rho_{1b2}| < 2b.
\end{equation}

Our goal is to determine existence conditions and steady-state characterization for equilibria of the shape dynamics \eqref{eqn:shapeDynamicsForOneAgentTwoBeacons}, and therefore we set the shape dynamics to zero to obtain the necessary conditions $\bar{x}_{1b1}=0=\bar{x}_{1b2}$ and 
\begin{align}
\label{eqn:shapeDynamicsForOneAgentTwoBeaconsEquilibria}
&(1-\lambda)\mu a_{b2}\left(\frac{\rho_{1b1}^2 + \rho_{1b2}^2 -(2b)^2}{2\rho_{1b1}\rho_{1b2}}\right)  +\lambda\mu a_{b1} + \frac{1}{\rho_{1b1}}
= 0, \nonumber \\
&\lambda\mu a_{b1}\left(\frac{\rho_{1b1}^2 + \rho_{1b2}^2 -(2b)^2}{2\rho_{1b1}\rho_{1b2}}\right)  +(1-\lambda)\mu a_{b2} + \frac{1}{\rho_{1b2}}
=0.
\end{align}
Noting that the parameters show up together in the same patterns, we will denote
\begin{equation}
    \tilde{a}_{b1} \triangleq \lambda\mu a_{b1} \in \mathds{R}, \qquad \tilde{a}_{b2} \triangleq (1-\lambda)\mu a_{b2} \in \mathds{R}.
\end{equation}
Then setting each equation in \eqref{eqn:shapeDynamicsForOneAgentTwoBeaconsEquilibria} over a common denominator, we have
\begin{align}
\label{eqn:shapeDynamicsForOneAgentTwoBeaconsEquilibriaV2}
&\tilde{a}_{b2}\left(\rho_{1b1}^2 + \rho_{1b2}^2 -4b^2 \right)  +2\tilde{a}_{b1}\rho_{1b1}\rho_{1b2} + 2\rho_{1b2}
=0,
\nonumber \\
&\tilde{a}_{b1}\left(\rho_{1b1}^2 + \rho_{1b2}^2 -4b^2 \right)  +2\tilde{a}_{b2}\rho_{1b1}\rho_{1b2} + 2\rho_{1b1}
=0.
\end{align}



Before proceeding, we \textbf{consider a special case for which} $\mathbf{\tilde{a}_{b1} = \tilde{a}_{b2} \neq 0}$, i.e. $\lambda a_{b1} = (1-\lambda) a_{b2} \neq 0$. Substituting into \eqref{eqn:shapeDynamicsForOneAgentTwoBeaconsEquilibriaV2} and taking the difference of the two equations yields $\rho_{1b1} = \rho_{1b2}$. Substituting this constraint back into the first equation in \eqref{eqn:shapeDynamicsForOneAgentTwoBeaconsEquilibriaV2} yields
\begin{align}
\label{eqn:shapeDynamicsForOneAgentTwoBeaconsEquilibria_sameLengths}
4\tilde{a}_{b1}\left[\rho_{1b1}^2 + \left(\frac{1}{2\tilde{a}_{b1}}\right)\rho_{1b1} -b^2\right] = 0.
\end{align}
From our constraint \eqref{eqn:rhoConstraintsOneAgentTwoBeacons} we have $\rho_{1b1} + \rho_{1b2} > 2b$, which in this case requires $\rho_{1b1} > b$, and therefore it follows from \eqref{eqn:shapeDynamicsForOneAgentTwoBeaconsEquilibria_sameLengths} that we must have $\tilde{a}_{b1} < 0$. Thus we can summarize this case by stating that if $\tilde{a}_{b1} = \tilde{a}_{b2} < 0$, then an equilibrium exists with 
\begin{equation}
    \rho_{1b1} = \rho_{1b2} = \left(\frac{1}{2}\right)\left(-\frac{1}{2\tilde{a}_{b1}} - \sqrt{\frac{1}{4\tilde{a}_{b1}^2} + 4}\right)
    = \left(\frac{1}{-4\tilde{a}_{b1}} \right) \left(1 + \sqrt{1 + \left(4\tilde{a}_{b1}b\right)^2} \right)
\end{equation}

\textbf{If} $\mathbf{\tilde{a}_{b1} \neq \tilde{a}_{b2}}$, then the two equations in  \eqref{eqn:shapeDynamicsForOneAgentTwoBeaconsEquilibriaV2} represent the intersection of two conic sections. While simple substitution leads to a 4th-order equation and difficult mathematical analysis, one can use methods such as conic pencils to determine the intersection points. However, in the present case a straightforward change of variables greatly simplifies our analysis. In particular, if we let $\rho_{1b+} = \rho_{1b2} + \rho_{1b1}$ and $\rho_{1b-} = \rho_{1b2} - \rho_{1b1}$, i.e. $\rho_{1b1} = (\rho_{1b+} - \rho_{1b-})/2$ and $\rho_{1b2} = (\rho_{1b+} + \rho_{1b-})/2$, then we can express \eqref{eqn:shapeDynamicsForOneAgentTwoBeaconsEquilibriaV2} (after some algebraic manipulation) as
\begin{align}
\label{eqn:shapeDynamicsForOneAgentTwoBeaconsEquilibriaChangeVars}
& \tilde{a}_{b2}\left(\frac{\rho_{1b+}^2 + \rho_{1b-}^2}{2} \right) +  \tilde{a}_{b1}\left(\frac{\rho_{1b+}^2 - \rho_{1b-}^2}{2} \right) +\left(\rho_{1b+} + \rho_{1b-} \right)  - 4\tilde{a}_{b2}b^2
=0, \nonumber \\
& \tilde{a}_{b1}\left(\frac{\rho_{1b+}^2 + \rho_{1b-}^2}{2} \right) +  \tilde{a}_{b2}\left(\frac{\rho_{1b+}^2 - \rho_{1b-}^2}{2} \right) +\left(\rho_{1b+} - \rho_{1b-} \right)  - 4\tilde{a}_{b1}b^2
=0.
\end{align}
Then taking the sum and difference of the two equations in \eqref{eqn:shapeDynamicsForOneAgentTwoBeaconsEquilibriaChangeVars}, we have
\begin{align}
\label{eqn:shapeDynamicsForOneAgentTwoBeaconsEquilibriaChangeVarsV2}
& \left(\tilde{a}_{b2} + \tilde{a}_{b1} \right)\rho_{1b+}^2 + 2\rho_{1b+} - 4\left(\tilde{a}_{b2} + \tilde{a}_{b1} \right)b^2
=0, \nonumber \\
& \left(\tilde{a}_{b2} - \tilde{a}_{b1} \right)\rho_{1b-}^2 + 2\rho_{1b-} - 4\left(\tilde{a}_{b2} - \tilde{a}_{b1} \right)b^2 
=0.
\end{align}
We note that if $\tilde{a}_{b2} + \tilde{a}_{b1} = 0$ then the first equation would require $\rho_{1b+} = 0$ which is not possible, and thus we assume $\tilde{a}_{b2} + \tilde{a}_{b1} \neq 0$. Since we've already assumed $\tilde{a}_{b2} - \tilde{a}_{b1} \neq 0$, we can express \eqref{eqn:shapeDynamicsForOneAgentTwoBeaconsEquilibriaChangeVarsV2} as 
\begin{align}
\label{eqn:shapeDynamicsForOneAgentTwoBeaconsEquilibriaChangeVarsV3a}
& \rho_{1b+}^2 + \left(\frac{2}{\tilde{a}_{b2} + \tilde{a}_{b1}}\right)\rho_{1b+} - (2b)^2 
=0,  \\
\label{eqn:shapeDynamicsForOneAgentTwoBeaconsEquilibriaChangeVarsV3b}
& \rho_{1b-}^2 + \left(\frac{2}{\tilde{a}_{b2} - \tilde{a}_{b1}}\right)\rho_{1b-} - (2b)^2
=0.
\end{align}
Since the first constraint in \eqref{eqn:rhoConstraintsOneAgentTwoBeacons} requires $\rho_{1b+} > 2b > 0$, i.e. $\rho_{1b+}^2 > (2b)^2$, it follows from \eqref{eqn:shapeDynamicsForOneAgentTwoBeaconsEquilibriaChangeVarsV3a} that we must have $\tilde{a}_{b2} + \tilde{a}_{b1} < 0$ with the equilibrium value
\begin{equation}
\label{eqn:rho1bPlusRoots}
    \rho_{1b+} = \left(\frac{1}{-(\tilde{a}_{b2} + \tilde{a}_{b1})} \right)\left(1 + \sqrt{1+\Bigl(2b(\tilde{a}_{b2} + \tilde{a}_{b1})\Bigr)^2}\right).
\end{equation}
Since the second constraint in \eqref{eqn:rhoConstraintsOneAgentTwoBeacons} requires $|\rho_{1b-}| < 2b$, i.e. $\rho_{1b-}^2 < (2b)^2$, it follows from \eqref{eqn:shapeDynamicsForOneAgentTwoBeaconsEquilibriaChangeVarsV3b} that we must have $(\tilde{a}_{b2} - \tilde{a}_{b1})\rho_{1b-} > 0$. Solving for the roots of \eqref{eqn:shapeDynamicsForOneAgentTwoBeaconsEquilibriaChangeVarsV3b} yields
\begin{align}
\label{eqn:rho1bMinusRoots}
    \rho_{1b-} 
    &= \frac{1}{2} \left[-\left(\frac{2}{\tilde{a}_{b2} - \tilde{a}_{b1}} \right) \pm \sqrt{\left(\frac{2}{\tilde{a}_{b2} - \tilde{a}_{b1}} \right)^2+4(2b)^2}\right] 
    = \left(\frac{1}{\tilde{a}_{b2} - \tilde{a}_{b1}} \right)\left(-1 \pm \sqrt{1+\Bigl(2b(\tilde{a}_{b2} - \tilde{a}_{b1})\Bigr)^2}\right),
\end{align}
and application of the constraint $(\tilde{a}_{b2} - \tilde{a}_{b1})\rho_{1b-} > 0$ requires that the second term in the \eqref{eqn:rho1bMinusRoots} product is positive. The following proposition summarizes the analysis of configuration 1 relative equilibria.

\begin{proposition}
Consider a system in which a single agent employs the beacon-referenced CB pursuit law \eqref{eqn:beaconReferencedCBLaw} with respect to two fixed beacons, according to the shape dynamics \eqref{eqn:shapeDynamicsForOneAgentTwoBeacons} parametrized by $\mu$, $\lambda$, and the CB parameters $a_{b1}$ and $a_{b2}$. Then a \textit{circling equilibrium} exists if and only if one of the following cases holds. (Note that in each case, at equilibrium we have $\bar{x}_{1b1}=0=\bar{x}_{1b2}$.)
%
%
\\
\indent (a) If $\lambda a_{b1} = (1-\lambda) a_{b2} < 0$, then a circling equilibrium exists with 
\begin{equation}
    \rho_{1b1} = \rho_{1b2} 
    = \left(\frac{1}{-4\mu\lambda a_{b1}} \right) \left(1 + \sqrt{1 + \left(4\mu\lambda a_{b1}b\right)^2} \right)
\end{equation}
\\
\indent (b) If $\lambda a_{b1} + (1-\lambda) a_{b2} < 0$, then a circling equilibrium exists with 
\begin{equation}
\label{eqn:oneAgentTwoBeaconsChangeOfVars}
   \rho_{1b1} = (\rho_{1b+} - \rho_{1b-})/2, \quad \rho_{1b2} = (\rho_{1b+} +
   \rho_{1b-})/2,
\end{equation} 
where
\begin{align}
\label{eqn:oneAgentTwoBeaconsCaseB}
    \rho_{1b+} = \frac{1 + \sqrt{1+\left[2b\mu\Bigl(\lambda a_{b1} + (1-\lambda) a_{b2}\Bigr)\right]^2}}{-\mu \bigl(\lambda a_{b1} + (1-\lambda) a_{b2} \bigr)},  \quad
     \rho_{1b-} = \frac{-1 + \sqrt{1+\left[2b\mu\Bigl(\lambda a_{b1} - (1-\lambda) a_{b2}\Bigr)\right]^2}}{-\mu \bigl(\lambda a_{b1} - (1-\lambda) a_{b2} \bigr)}.
\end{align}
\label{prop:existencePropSingleAgentTwoBeacons}
\end{proposition}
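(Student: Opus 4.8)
The plan is to identify "circling equilibrium" with "equilibrium of the shape dynamics \eqref{eqn:shapeDynamicsForOneAgentTwoBeacons}" (the reconstruction of the full frame from the reduced dynamics having already been invoked), and then solve the resulting algebraic system while carrying along the geometric feasibility constraints \eqref{eqn:rhoConstraintsOneAgentTwoBeacons} so as to obtain both the existence dichotomy and the explicit radius formulas. First I would set every shape derivative to zero. Because $\dot\rho_{1b1}=\bar x_{1b1}$ and $\dot\rho_{1b2}=\bar x_{1b2}$, this immediately forces $\bar x_{1b1}=\bar x_{1b2}=0$ — the common feature asserted in the proposition. Substituting these back into the two $\dot{\bar x}$ equations and using the Law of Cosines identity \eqref{eqn:lawOfCosinesOneAgentTwoBeacons} collapses the system to \eqref{eqn:shapeDynamicsForOneAgentTwoBeaconsEquilibria}; introducing $\tilde a_{b1}=\lambda\mu a_{b1}$ and $\tilde a_{b2}=(1-\lambda)\mu a_{b2}$ and clearing denominators puts it in the polynomial form \eqref{eqn:shapeDynamicsForOneAgentTwoBeaconsEquilibriaV2}.

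Next I would split according to whether $\tilde a_{b1}=\tilde a_{b2}$. In the symmetric subcase $\tilde a_{b1}=\tilde a_{b2}\neq 0$, subtracting the two equations of \eqref{eqn:shapeDynamicsForOneAgentTwoBeaconsEquilibriaV2} gives $\rho_{1b1}=\rho_{1b2}$, and substituting this back produces the single quadratic \eqref{eqn:shapeDynamicsForOneAgentTwoBeaconsEquilibria_sameLengths} in $\rho_{1b1}$; the constraint $\rho_{1b1}+\rho_{1b2}>2b$ reads $\rho_{1b1}>b>0$, which is attainable by a (necessarily unique positive) root precisely when $\tilde a_{b1}<0$, and rewriting in terms of $\mu\lambda a_{b1}$ yields part (a). In the asymmetric subcase $\tilde a_{b1}\neq\tilde a_{b2}$, the key move is the change of variables $\rho_{1b\pm}=\rho_{1b2}\pm\rho_{1b1}$: the sum and difference of the transformed equations \eqref{eqn:shapeDynamicsForOneAgentTwoBeaconsEquilibriaChangeVars} decouple into the two independent quadratics \eqref{eqn:shapeDynamicsForOneAgentTwoBeaconsEquilibriaChangeVarsV3a} and \eqref{eqn:shapeDynamicsForOneAgentTwoBeaconsEquilibriaChangeVarsV3b} (after discarding $\tilde a_{b1}+\tilde a_{b2}=0$, which would force the impossible $\rho_{1b+}=0$). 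Imposing $\rho_{1b+}>2b$ on the first quadratic forces $\tilde a_{b1}+\tilde a_{b2}<0$ and selects the root \eqref{eqn:rho1bPlusRoots}; imposing $|\rho_{1b-}|<2b$ on the second yields $(\tilde a_{b2}-\tilde a_{b1})\rho_{1b-}>0$, which fixes the sign of the radical in \eqref{eqn:rho1bMinusRoots}. Transforming back via $\rho_{1b1}=(\rho_{1b+}-\rho_{1b-})/2$, $\rho_{1b2}=(\rho_{1b+}+\rho_{1b-})/2$ and re-expanding the $\tilde a$'s gives the formulas in part (b).

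I expect the main obstacle to be the "only if" direction together with the degenerate boundary subcases: one must confirm that when $\tilde a_{b1}+\tilde a_{b2}\ge 0$ (respectively $\tilde a_{b1}=\tilde a_{b2}\ge 0$ in the symmetric case) the feasibility inequalities \eqref{eqn:rhoConstraintsOneAgentTwoBeacons} genuinely cannot be met, that the rejected root of each quadratic indeed violates the relevant inequality, and that the borderline parameter values ($\tilde a_{bi}=0$, $\tilde a_{b1}+\tilde a_{b2}=0$, $\tilde a_{b1}-\tilde a_{b2}=0$) are handled without gaps. The remaining work — verifying the algebraic passage from \eqref{eqn:shapeDynamicsForOneAgentTwoBeaconsEquilibriaV2} to \eqref{eqn:shapeDynamicsForOneAgentTwoBeaconsEquilibriaChangeVars} and the sum/difference step — is routine bookkeeping once the change of variables is in hand, and indeed the whole argument rests on the fortunate fact that this substitution simultaneously diagonalizes the two conics, converting a would-be quartic into two decoupled quadratics with transparent root selection.
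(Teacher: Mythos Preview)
Your proposal is correct and follows essentially the same approach as the paper: set the shape dynamics to zero to obtain $\bar x_{1b1}=\bar x_{1b2}=0$, reduce to the polynomial system \eqref{eqn:shapeDynamicsForOneAgentTwoBeaconsEquilibriaV2}, handle the symmetric case $\tilde a_{b1}=\tilde a_{b2}$ by subtraction, and in the asymmetric case use the sum/difference change of variables $\rho_{1b\pm}$ to decouple into two quadratics whose roots are selected by the geometric constraints \eqref{eqn:rhoConstraintsOneAgentTwoBeacons}. Your identification of the ``only if'' direction and the degenerate parameter subcases as the points requiring the most care is also in line with where the paper's argument is thinnest.
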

\begin{proof}
Follows from the discussion above.
%
\end{proof}


\noindent
{\bf Remark:} \kgEDIT{The position of the agent with respect to the two beacons forms a triangle, and therefore triangle geometry can be used to show that the vertical positioning of the agent (i.e. the third component of $\mathbf{r}_{1}$, denoted as $r_{13}$) is given by $r_{13} = (\rho_{1b+}\rho_{1b-})/(4b) + b$ and the radius of the circle is given by $radius = \sqrt{\rho_{1b2}^2 - r^2_{13}}$.} Representative trajectories are shown in Figure \ref{fig:OneAgentTwoBeacon}. Note also that if one of the CB parameters is zero while the other is negative (e.g. if $a_{b2}=0$ and $a_{b1} < 0$), then the conditions of case (b) are satisfied, and simplifying \eqref{eqn:oneAgentTwoBeaconsChangeOfVars} and \eqref{eqn:oneAgentTwoBeaconsCaseB} yields
\begin{equation}
    \rho_{1b1} = \frac{1}{\lambda\mu(-a_{b1})}, \qquad \rho_{1b2} = \sqrt{\rho_{1b1}^2 + (2b)^2}.
\end{equation}

\begin{figure}
\centering
\begin{subfigure}{.5\textwidth}
  \centering
  \includegraphics[width=.8\linewidth]{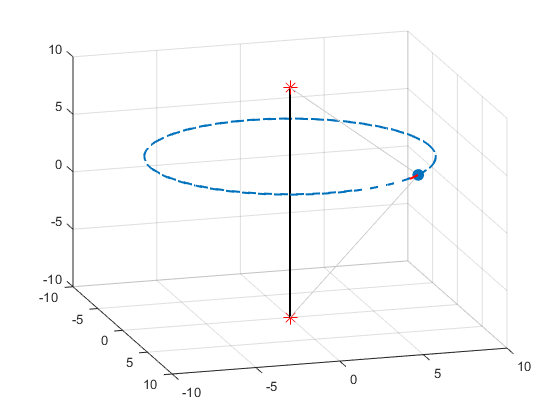}
  \label{fig:sub1}
\end{subfigure}%
\begin{subfigure}{.5\textwidth}
  \centering
  \includegraphics[width=.8\linewidth]{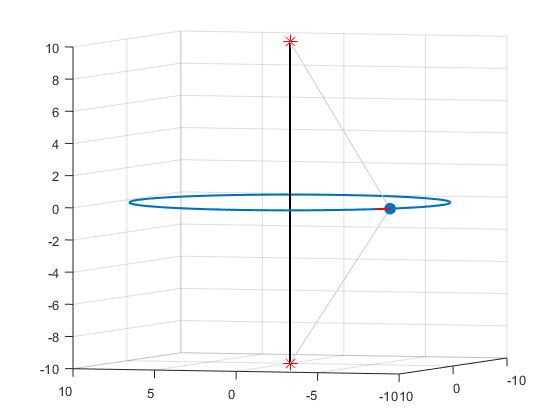}
  \label{fig:sub2}
\end{subfigure}
\caption{These figures depict circling equilibria for configuration I (i.e. one agent referencing two fixed beacons) with beacon positioning $b=10$ and control parameters $a_0=-0.156$, $a = -0.187$, and $\lambda = 0.45$ in the simulation on the left, and $a_0=-0.156$, $a = -0.156$, and $\lambda = 0.5$ in the simulation on the right.}
\label{fig:OneAgentTwoBeacon}
\end{figure}
%
%
%

\section{Shape Variable Representation for the Two-Agent Configurations}
\label{sec:ShapeVarDescription}
%
In the remainder of this work, every configuration will involve at least one beacon and two mobile agents (engaged in a mutual pursuit). Moreover, to simplify analysis, we introduce the following assumptions\footnote{For future work, we intend to relax some of these assumptions to explore the broader space of possible system behaviors.}:
\begin{itemize}
\item[(A1)] The bearing offset parameters with respect to the beacon are common for both agents, i.e. $a_{b1} = a_{b2} = a_0$;
\item[(A2)] The bearing offset parameters with respect to the other agent are the same for both agents, i.e. $a_1 = a_2 = a$;
\end{itemize}
so that the closed-loop dynamics \bdEDIT{on a 10-dimensional \emph{reduced space} \citep{KSG_BD_ACC_2018}} are given by
\begin{align}
\label{eqn:closedLoopDynamics}
\dot{\bf r}_1 &= {\bf x}_1, \nonumber \\
\dot{\bf r}_2 &= {\bf x}_2, \nonumber \\
\dot{\bf x}_1 
&= 
-(1-\lambda)\mu (\bar{x}_1 - a)\left(\rUnit - \bar{x}_1{\bf x}_1\right)   
- \lambda \mu (\bar{x}_{1b1} - a_0)\left(\rOneBUnit - \bar{x}_{1b1} {\bf x}_1\right),
\nonumber \\
\dot{\bf x}_2
&= 
-(1-\lambda)\mu (\bar{x}_2 - a)\left(-\rUnit - \bar{x}_2{\bf x}_2\right) 
- \lambda \mu (\bar{x}_{2b2} - a_0)\left(\rTwoBUnit - \bar{x}_{2b2} {\bf x}_2\right),
\end{align}
where $\mathbf{r}=\mathbf{r}_1 - \mathbf{r}_2$ and 
\begin{equation}
    \bar{x}_{1} = \mathbf{x}_1 \cdot\frac{\mathbf{r}_{12}}{\left|{\mathbf{r}_{12}}\right|} =  \mathbf{x}_1 \cdot \rUnit, \quad
\bar{x}_{2} = \mathbf{x}_2 \cdot\frac{\mathbf{r}_{21}}{\left|{\mathbf{r}_{21}}\right|} =  \mathbf{x}_2 \cdot \left(-\rUnit\right).
\end{equation}
\bdEDIT{As the complete frame $[\mathbf{x}_i,\mathbf{y}_i,\mathbf{z}_i]$ can be reconstructed from the evolution of this reduced dynamics, we restrict our attention to \eqref{eqn:closedLoopDynamics} in the remainder of this work.}
As a step towards developing a scalar parameterization of the shape dynamics (i.e. the dynamics of the relative configuration of the agents and beacons), we also define 
\begin{equation}
    \tilde{x} = \mathbf{x}_1 \cdot \mathbf{x}_2, \quad 
\rho = |\mathbf{r}| >0.
\end{equation}
This allows us to express the dynamics of these variables as\footnote{Please refer the the electronic supplementary material for detailed derivation of \eqref{eqn:2AgentShapeDynamicsSet1}.}: 
\begin{align}
\label{eqn:2AgentShapeDynamicsSet1}
\dot{\bar{x}}_1
&= -(1-\lambda)\mu (\bar{x}_1 - a) \left(1 - \bar{x}_1^2 \right)  -\lambda\mu(\bar{x}_{1b1}-a_0)\left(\rOneBUnit \cdot \rUnit - \bar{x}_{1b1}\bar{x}_1 \right)
+ \frac{1}{\rho} \Bigl(1 - \tilde{x} - \bar{x}_1^2 - \bar{x}_1 \bar{x}_{2} \Bigr),  \nonumber \\
\dot{\bar{x}}_2
&= -(1-\lambda)\mu (\bar{x}_2 - a) \left(1 - \bar{x}_2^2 \right)  -\lambda\mu(\bar{x}_{2b2}-a_0)\left(-\rTwoBUnit \cdot \rUnit - \bar{x}_{2b2}\bar{x}_2 \right)
+ \frac{1}{\rho} \Bigl(1 - \tilde{x} - \bar{x}_2^2 - \bar{x}_1 \bar{x}_{2} \Bigr),  \nonumber \\
\dot{\bar{x}}_{1b1} 
&= -(1-\lambda)(\mu (\bar{x}_1 - a))\left(\rOneBUnit \cdot \rUnit - \bar{x}_{1b1}\bar{x}_1 \right)  -\left(\lambda\mu(\bar{x}_{1b1}-a_0) - \frac{1}{\rho_{1b1}}\right)\Bigl(1 - \bar{x}_{1b1}^2 \Bigr), \nonumber \\
\dot{\bar{x}}_{2b2} 
&= -(1-\lambda)(\mu (\bar{x}_2 - a))\left(-\rTwoBUnit \cdot \rUnit - \bar{x}_{2b2}\bar{x}_2 \right)  -\left(\lambda\mu(\bar{x}_{2b2}-a_0) - \frac{1}{\rho_{2b2}}\right)\Bigl(1 - \bar{x}_{2b2}^2 \Bigr), \nonumber \\
\dot{\tilde{x}}
&= (1-\lambda)\mu (\bar{x}_1 - a) \left(\bar{x}_{2} + \tilde{x} \bar{x}_1\right)  + (1-\lambda)\mu (\bar{x}_2 - a) \left(\bar{x}_{1} + \tilde{x} \bar{x}_2\right)  \nonumber \\
&\qquad -\lambda\mu(\bar{x}_{1b1}-a_0)\left({\bf x}_2 \cdot \rOneBUnit  - \bar{x}_{1b1}\tilde{x}\right) 
-\lambda\mu(\bar{x}_{2b2}-a_0)\left({\bf x}_1 \cdot \rTwoBUnit  - \bar{x}_{2b2}\tilde{x}\right) 
\nonumber \\
\dot{\rho} 
&= \bar{x}_1 + \bar{x}_{2}, \nonumber \\
\dot{\rho}_{1b1} 
&= \bar{x}_{1b1} \nonumber \\
\dot{\rho}_{2b2} 
&= \bar{x}_{2b2},
\end{align}
where we note that the explicit dot product terms still need to be represented in terms of the shape variables to make this completely self-contained. We can derive expressions for these dot product terms by making use of our vector closure constraint \eqref{eqn:closureConstraint}. For example, \eqref{eqn:closureConstraint} implies that
\begin{align}
    {\bf x}_2 \cdot \mathbf{r}_{1b1} 
    = {\bf x}_2 \cdot \left(\hat{\bf b} +  \mathbf{r}_{2b2} + \mathbf{r} \right)
    = {\bf x}_2 \cdot \hat{\bf b} + \rho_{2b2} \bar{x}_{2b2} -\rho \bar{x}_2,
\end{align}
from which it follows that 
\begin{align}
\label{eqn:x2r1b1eqnOrig}
    {\bf x}_2 \cdot \rOneBUnit 
    &= \frac{1}{\rho_{1b1}}\left({\bf x}_2 \cdot \hat{\bf b} + \rho_{2b2} \bar{x}_{2b2} -\rho \bar{x}_2\right) 
    \in [-1,1].
\end{align}
Similar calculations lead to 
\begin{align}
\label{eqn:x1r2b2eqnOrig}
    {\bf x}_1 \cdot \rTwoBUnit 
    &= \frac{1}{\rho_{2b2}}\left(-{\bf x}_1 \cdot \hat{\bf b} + \rho_{1b1} \bar{x}_{1b1} -\rho \bar{x}_1\right) \in [-1,1],  \\
\label{eqn:rUnitb1Unit}
\rOneBUnit \cdot \rUnit 
    &= \frac{1}{2\rho\rho_{1b1}}\left(\rho_{1b1}^2 + \rho^2 - \rho_{2b2}^2 -2(\mathbf{r}_{2} \cdot \hat{\bf b}) \right) \in [-1,1], \\
\label{eqn:rUnitb2Unit}
    \rTwoBUnit \cdot \rUnit 
    &= \frac{1}{2\rho\rho_{2b2}}\left(\rho_{1b1}^2 - \rho^2 - \rho_{2b2}^2 -2(\mathbf{r}_{1} \cdot \hat{\bf b})\right) \in [-1,1].
\end{align}
From these calculations it is clear that we can make the shape dynamics \eqref{eqn:2AgentShapeDynamicsSet1} self-contained if we augment them with the variables
\begin{equation}
     \hat{r}_{1} = \mathbf{r}_{1} \cdot \hat{\bf b}, \quad 
     \hat{r}_{2} = \mathbf{r}_{2} \cdot \hat{\bf b}, \quad
     \hat{x}_{1} = \mathbf{x}_{1} \cdot \hat{\bf b}, \quad 
     \hat{x}_{2} = \mathbf{x}_{2} \cdot \hat{\bf b},
\end{equation}
with dynamics given by 
\begin{align}
\label{eqn:2AgentShapeDynamicsSet2}
    \dot{\hat{r}}_1 &= \hat{x}_1, \qquad
    \dot{\hat{r}}_2 = \hat{x}_2, \nonumber \\
    \dot{\hat{x}}_1 
    &= -(1-\lambda)\mu (\bar{x}_1 - a)\left(\rUnit \cdot \hat{\bf b}  - \bar{x}_1{\bf x}_1 \cdot \hat{\bf b}\right)  
- \lambda \mu (\bar{x}_{1b1} - a_0)\left(\rOneBUnit \cdot \hat{\bf b} - \bar{x}_{1b1} {\bf x}_1 \cdot \hat{\bf b}\right) \nonumber \\
    &= -\frac{(1-\lambda)\mu (\bar{x}_1 - a)}{\rho}\left(\mathbf{r} \cdot \hat{\bf b} \right)  
- \frac{\lambda \mu (\bar{x}_{1b1} - a_0)}{\rho_{1b1}}\left(\mathbf{r}_{1b1} \cdot \hat{\bf b}\right)
+\mu\Bigl((1-\lambda) \bar{x}_1(\bar{x}_1 - a)+ \lambda \bar{x}_{1b1}(\bar{x}_{1b1} - a_0)\Bigr)\left( {\bf x}_1 \cdot \hat{\bf b}\right)
\nonumber \\
    &= -\mu\left(\frac{(1-\lambda) (\bar{x}_1 - a)}{\rho} + \frac{\lambda (\bar{x}_{1b1} - a_0)}{\rho_{1b1}}\right)\hat{r}_1  
+\left(\frac{(1-\lambda)\mu (\bar{x}_1 - a)}{\rho}\right)\hat{r}_2
- \frac{2\lambda \mu (\bar{x}_{1b1} - a_0)b^2}{\rho_{1b1}} 
\nonumber \\
&\qquad+\mu\Bigl((1-\lambda) \bar{x}_1(\bar{x}_1 - a)+ \lambda \bar{x}_{1b1}(\bar{x}_{1b1} - a_0)\Bigr)\hat{x}_1
\nonumber \\
    \dot{\hat{x}}_2 
        &= -(1-\lambda)\mu (\bar{x}_2 - a)\left(-\rUnit \cdot \hat{\bf b}  - \bar{x}_2{\bf x}_2 \cdot \hat{\bf b}\right)  
- \lambda \mu (\bar{x}_{2b2} - a_0)\left(\rTwoBUnit \cdot \hat{\bf b} - \bar{x}_{2b2} {\bf x}_2 \cdot \hat{\bf b}\right) \nonumber \\
    &= -\frac{(1-\lambda)\mu (\bar{x}_2 - a)}{\rho}\left(-\mathbf{r} \cdot \hat{\bf b} \right)  
- \frac{\lambda \mu (\bar{x}_{2b2} - a_0)}{\rho_{2b2}}\left(\mathbf{r}_{2b2} \cdot \hat{\bf b}\right)
+\mu\Bigl((1-\lambda) \bar{x}_2(\bar{x}_2 - a)+ \lambda \bar{x}_{2b2}(\bar{x}_{2b2} - a_0)\Bigr)\left( {\bf x}_2 \cdot \hat{\bf b}\right)
\nonumber \\
    &= -\mu\left(\frac{(1-\lambda) (\bar{x}_2 - a)}{\rho} + \frac{\lambda (\bar{x}_{2b2} - a_0)}{\rho_{2b2}}\right)\hat{r}_2  
+\left(\frac{(1-\lambda)\mu (\bar{x}_2 - a)}{\rho}\right)\hat{r}_1
+ \frac{2\lambda \mu (\bar{x}_{2b2} - a_0)b^2}{\rho_{2b2}} 
\nonumber \\
&\qquad+\mu\Bigl((1-\lambda) \bar{x}_2(\bar{x}_2 - a)+ \lambda \bar{x}_{2b2}(\bar{x}_{2b2} - a_0)\Bigr)\hat{x}_2.
\end{align}
In the next sections we will characterize equilibria for these shape dynamics \eqref{eqn:2AgentShapeDynamicsSet1}, \eqref{eqn:2AgentShapeDynamicsSet2} and determine conditions under which those equilibria exist.
\section{Configuration II: Two Agents Referencing A Single Beacon}
\label{sec:twoAgentsOneBeacon}
%
We first consider the case in which there is a single beacon (at the origin) referenced by two agents in mutual pursuit\footnote{This is the case that was considered exclusively by \cite{KSG_BD_ACC_2018}, but with a different version of the beacon-referenced control law. As such, many of the results in this section are similar in spirit to those presented by \cite{KSG_BD_ACC_2018}.}. In this case $\hat{\bf b}$ is the zero vector, i.e. $\hat{r}_1= 0 = \hat{r}_2$ and $\hat{x}_1= 0 = \hat{x}_2$, and therefore our shape dynamics simplify to \eqref{eqn:2AgentShapeDynamicsSet1} with constraints given by \eqref{eqn:x2r1b1eqnOrig}, \eqref{eqn:x1r2b2eqnOrig}, \eqref{eqn:rUnitb1Unit}, and \eqref{eqn:rUnitb2Unit}. Then setting $\dot{\rho}=0$ and $\dot{\rho}_{1b1} = 0 = \dot{\rho}_{2b2}$ requires $\bar{x}_2 = -\bar{x}_1$ and $\bar{x}_{1b} = 0 = \bar{x}_{2b}$, and the shape dynamics on these nullclines are given by
\begin{align}
\label{eqn:2AgentsOneBeaconDynamicsAtEquilibrium}
\dot{\bar{x}}_1
&= -(1-\lambda)\mu (\bar{x}_1 - a) \left(1 - \bar{x}_1^2 \right)  
+ \lambda\mu a_0\left(\frac{\rho_{1b1}^2 +\rho^2 - \rho_{2b2}^2 }{2\rho \rho_{1b1}} \right) + \frac{1}{\rho} \Bigl(1 - \tilde{x}\Bigr),  \nonumber \\
\dot{\bar{x}}_2
&= -(1-\lambda)\mu (-\bar{x}_1 - a) \left(1 - \bar{x}_1^2 \right)  + \lambda\mu a_0\left(\frac{-\rho_{1b1}^2 +\rho^2 + \rho_{2b2}^2 }{2\rho \rho_{2b2}} \right) + \frac{1}{\rho} \Bigl(1 - \tilde{x}\Bigr),  \nonumber \\
\dot{\bar{x}}_{1b1} 
&= -(1-\lambda)\Bigl(\mu (\bar{x}_1 - a)\Bigr)\left(\frac{\rho_{1b1}^2 +\rho^2 - \rho_{2b2}^2 }{2\rho \rho_{1b1}} \right)  + \frac{1}{\rho_{1b1}} + \lambda\mu a_0, \nonumber \\
\dot{\bar{x}}_{2b2} 
&= -(1-\lambda)\Bigl(\mu (-\bar{x}_1 - a)\Bigr)\left(\frac{-\rho_{1b1}^2 +\rho^2 + \rho_{2b2}^2 }{2\rho \rho_{2b2}} \right) + \frac{1}{\rho_{2b2}} + \lambda\mu a_0, \nonumber \\
\dot{\tilde{x}}
&= \mu \bar{x}_{1} \left( \lambda a_0  \rho \left(\frac{\rho_{2b2}-\rho_{1b1}}{\rho_{1b1}\rho_{2b2}}\right) 
- 2(1-\lambda) \bar{x}_1\left(1 - \tilde{x}\right)\right).  
\end{align}
\subsection{Special case for configuration II: $a_0=0$}
If $a_0=0$, then \eqref{eqn:2AgentsOneBeaconDynamicsAtEquilibrium} simplifies to 
\begin{align}
\label{eqn:2AgentsOneBeaconA0DynamicsAtEquilibrium}
\dot{\bar{x}}_1
&= -(1-\lambda)\mu (\bar{x}_1 - a) \left(1 - \bar{x}_1^2 \right)  
 + \frac{1}{\rho} \Bigl(1 - \tilde{x}\Bigr),  \nonumber \\
\dot{\bar{x}}_2
&= -(1-\lambda)\mu (-\bar{x}_1 - a) \left(1 - \bar{x}_1^2 \right)  + \frac{1}{\rho} \Bigl(1 - \tilde{x}\Bigr),  \nonumber \\
\dot{\bar{x}}_{1b1} 
&= -(1-\lambda)\Bigl(\mu (\bar{x}_1 - a)\Bigr)\left(\frac{\rho_{1b1}^2 +\rho^2 - \rho_{2b2}^2 }{2\rho \rho_{1b1}} \right)  + \frac{1}{\rho_{1b1}} , \nonumber \\
\dot{\bar{x}}_{2b2} 
&= -(1-\lambda)\Bigl(\mu (-\bar{x}_1 - a)\Bigr)\left(\frac{-\rho_{1b1}^2 +\rho^2 + \rho_{2b2}^2 }{2\rho \rho_{2b2}} \right) + \frac{1}{\rho_{2b2}}, \nonumber \\
\dot{\tilde{x}}
&= - 2\mu(1-\lambda) \bar{x}_1^2\left(1 - \tilde{x}\right).  
\end{align}
We note that the first two equations can both be zero if and only if $\bar{x}_1 = \pm 1 $ or $\bar{x}_1 = 0$, and since the latter option is the only one that results in $\dot{\tilde{x}}=0$, we conclude that $\bar{x}_1=0$ at equilibrium. From this we can state the following proposition. 

\begin{proposition}
Consider a beacon-referenced mutual CB pursuit system with shape dynamics \eqref{eqn:2AgentsOneBeaconDynamicsAtEquilibrium} parametrized by $\mu$, $\lambda$, and the CB parameters $a$, $a_0$, with $a_0=0$. Then, a \textit{circling equilibrium} exists if and only if $a<0$, and the corresponding equilibrium values satisfy
\begin{equation}
\begin{aligned}
\bullet \quad &
\bar{x}_1 = \bar{x}_2 = 0,
\quad 
\bar{x}_{1b1} = \bar{x}_{2b2} = 0,
\quad 
\tilde{x} = -1,
\\
\bullet \quad &
\rho = \frac{2}{(1-\lambda)\mu (-a)},
\quad
\rho_{1b1} = \rho_{2b2}.
\end{aligned}
\label{eqn:a0SameAlphaCircValues}
\end{equation}
\label{prop:existenceProp_1}
\end{proposition}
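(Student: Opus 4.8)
The plan is to build directly on the reductions already carried out in the text preceding the statement. Requiring $\dot\rho=\dot\rho_{1b1}=\dot\rho_{2b2}=0$ in \eqref{eqn:2AgentShapeDynamicsSet1} forces $\bar x_2=-\bar x_1$ and $\bar x_{1b1}=\bar x_{2b2}=0$, so the equilibrium dynamics are governed by \eqref{eqn:2AgentsOneBeaconA0DynamicsAtEquilibrium}, the $a_0=0$ specialization. Since the difference of the first two equations there is $\dot{\bar x}_1-\dot{\bar x}_2=-2(1-\lambda)\mu\,\bar x_1\bigl(1-\bar x_1^2\bigr)$, any equilibrium has $\bar x_1\in\{0,\pm1\}$; the branches $\bar x_1=\pm1$ either violate $\dot{\tilde x}=0$ or (when $\tilde x=1$ is forced) correspond to $\dot{\mathbf r}=\mathbf 0$, i.e.\ a rectilinear rather than a circling relative equilibrium, so for a circling equilibrium we must have $\bar x_1=0$. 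Substituting $\bar x_1=\bar x_2=0$ and $a_0=0$ into \eqref{eqn:2AgentsOneBeaconA0DynamicsAtEquilibrium} makes the $\dot{\bar x}_2$ and $\dot{\tilde x}$ equations vanish identically, leaving the $\dot{\bar x}_1=0$ equation and the two beacon equations.

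Next I would exploit the symmetry of the two beacon equations. With $\bar x_1=0$ and $a_0=0$, multiplying $\dot{\bar x}_{1b1}=0$ and $\dot{\bar x}_{2b2}=0$ through by $\rho_{1b1}$ and $\rho_{2b2}$ respectively turns them into $(1-\lambda)\mu a\,\frac{\rho_{1b1}^2+\rho^2-\rho_{2b2}^2}{2\rho}+1=0$ and $(1-\lambda)\mu a\,\frac{\rho^2+\rho_{2b2}^2-\rho_{1b1}^2}{2\rho}+1=0$; subtracting gives $(1-\lambda)\mu a\,(\rho_{1b1}^2-\rho_{2b2}^2)/\rho=0$. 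One first rules out $a=0$ (then the beacon equation degenerates to $1/\rho_{1b1}=0$, impossible), and since $\rho_{1b1},\rho_{2b2}>0$ this forces $\rho_{1b1}=\rho_{2b2}$. Feeding this back into either beacon equation collapses it to $(1-\lambda)\mu a\,\tfrac{\rho}{2}+1=0$, i.e.\ $\rho=\dfrac{2}{(1-\lambda)\mu(-a)}$, which is positive precisely when $a<0$ — this is the source of the ``only if''. Finally, the $\dot{\bar x}_1=0$ equation at $\bar x_1=0$ reads $(1-\lambda)\mu a+\tfrac1\rho(1-\tilde x)=0$; substituting the value of $\rho$ yields $1-\tilde x=2$, hence $\tilde x=-1$. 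This completes the ``only if'' direction and pins down all the equilibrium values.

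For the ``if'' direction, assuming $a<0$ I would verify directly that the tuple $\bar x_1=\bar x_2=0$, $\bar x_{1b1}=\bar x_{2b2}=0$, $\tilde x=-1$, $\rho=2/\bigl((1-\lambda)\mu(-a)\bigr)$, together with any choice of $\rho_{1b1}=\rho_{2b2}\ge\rho/2$ (which closes the triangle formed by the two agents and the beacon), annihilates every right-hand side of \eqref{eqn:2AgentsOneBeaconDynamicsAtEquilibrium}, that these shape values are realized by an admissible spatial configuration with the full natural Frenet frames reconstructed from the reduced data as in \citep{Justh_PSK_3Dformation}, and that the corresponding common curvature has magnitude $\mu(1-\lambda)|a|\ne0$, confirming that the motion is a circle and not a line.

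I do not anticipate a genuine obstacle: once $\bar x_1=0$ everything reduces to linear and quadratic bookkeeping. The points that need care are logical rather than computational — justifying that the $\bar x_1=\pm1$ branches are excluded for \emph{circling} equilibria, explicitly discarding $a=0$ before dividing by it in the subtraction step, and, in the converse, noting that a zero of the shape dynamics genuinely lifts to a relative equilibrium of the full self-steering-particle model \eqref{eqn:closedLoopDynamics}.
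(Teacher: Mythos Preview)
Your proposal is correct and follows essentially the same approach as the paper: reduce to $\bar x_1=0$ by exploiting the $\dot{\tilde x}$ equation, then solve the remaining algebraic system. The only difference is the order in which you unwind the algebra --- you first subtract the two beacon equations to obtain $\rho_{1b1}=\rho_{2b2}$, then extract $\rho$ from a beacon equation, and finally read off $\tilde x=-1$ from $\dot{\bar x}_1=0$, whereas the paper first uses $\dot{\bar x}_1=0$ to write $\rho=(1-\tilde x)/\bigl((1-\lambda)\mu(-a)\bigr)$, substitutes this into the beacon equations, and then deduces $\rho_{1b1}=\rho_{2b2}$ and $\tilde x=-1$ simultaneously. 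Your treatment is, if anything, slightly more careful: you explicitly exclude the $\bar x_1=\pm 1$ branch as non-circling (the paper's pre-proposition remark that only $\bar x_1=0$ gives $\dot{\tilde x}=0$ overlooks that $\bar x_1=\pm1$ with $\tilde x=1$ also does), and you address the ``if'' direction and the reconstruction of the full frame, which the paper leaves implicit.
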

\begin{proof}
Substituting $\bar{x}_1=0$ into \eqref{eqn:2AgentsOneBeaconA0DynamicsAtEquilibrium} and setting to zero leads to 
\begin{subequations}
\begin{align}
(1-\lambda)\mu a  + \frac{1}{\rho} \Bigl(1 - \tilde{x}\Bigr)
&=
0,
\label{eqn:2AgentDynamicsAtEquilibriumCommonCBParameter_a0Zero__1}
\\
(1-\lambda) \mu a\left(\frac{\rho_{1b1}^2 +\rho^2 - \rho_{2b2}^2 }{2\rho \rho_{1b1}} \right)
&=
\frac{-1}{\rho_{1b1}}, 
\label{eqn:2AgentDynamicsAtEquilibriumCommonCBParameter_a0Zero__2}
\\
(1-\lambda) \mu a\left(\frac{\rho_{2b2}^2 +\rho^2 - \rho_{1b1}^2 }{2\rho \rho_{2b2}} \right)
&=
\frac{-1}{\rho_{2b2}}.
\label{eqn:2AgentDynamicsAtEquilibriumCommonCBParameter_a0Zero__3}
\end{align}
\end{subequations}
Now, if $\tilde{x} = 1$, the first condition \eqref{eqn:2AgentDynamicsAtEquilibriumCommonCBParameter_a0Zero__1} holds true if and only if $a=0$. But with these choices for $\tilde{x}$ and $a$, the last two conditions \eqref{eqn:2AgentDynamicsAtEquilibriumCommonCBParameter_a0Zero__2}-\eqref{eqn:2AgentDynamicsAtEquilibriumCommonCBParameter_a0Zero__3} lead to 
$\frac{1}{\rho_{1b}} = \frac{1}{\rho_{2b}} = 0$, which cannot be true since both $\rho_{1b}$ and $\rho_{2b}$ are finite. Therefore we must have $\tilde{x} \neq 1$ at an equilibrium, and then the first condition \eqref{eqn:2AgentDynamicsAtEquilibriumCommonCBParameter_a0Zero__1} yields the equilibrium value of $\rho$ as
\begin{equation}
\rho = \frac{(1 - \tilde{x})}{(1-\lambda)\mu (-a)}.
\label{eqn:rhoEquilibrium_a0zero}
\end{equation}
As $\rho$ must be positive and finite, \eqref{eqn:rhoEquilibrium_a0zero} yields a meaningful solution if and only if $a < 0$. Substituting this solution for $\rho$ into \eqref{eqn:2AgentDynamicsAtEquilibriumCommonCBParameter_a0Zero__2}-\eqref{eqn:2AgentDynamicsAtEquilibriumCommonCBParameter_a0Zero__3}, we have 
\begin{subequations}
\begin{align}
\frac{1}{2 \rho_{1b1}}\biggl[-\left(1 - \tilde{x}	\right)\left(\frac{\rho_{1b1}^2 + \rho^{2} - \rho_{2b2}^2 }{\rho^2} \right) + 2 \biggr]
&=0,
\label{eqn:EQ_Final_Conditions_on_Nullcline__1}
\\
\frac{1}{2 \rho_{2b2}}\biggl[-\left(1 - \tilde{x}	\right)\left(\frac{-\rho_{1b1}^2 + \rho^{2} + \rho_{2b2}^2 }{\rho^2} \right) + 2 \biggr]
&= 0,  
\label{eqn:EQ_Final_Conditions_on_Nullcline__2}
\end{align}
\end{subequations}
which holds true if and only if $\rho_{1b1} = \rho_{2b2}$ and $\tilde{x} = -1$. 
\end{proof}

\begin{figure}
\begin{center}
  \includegraphics[width=0.4\textwidth]{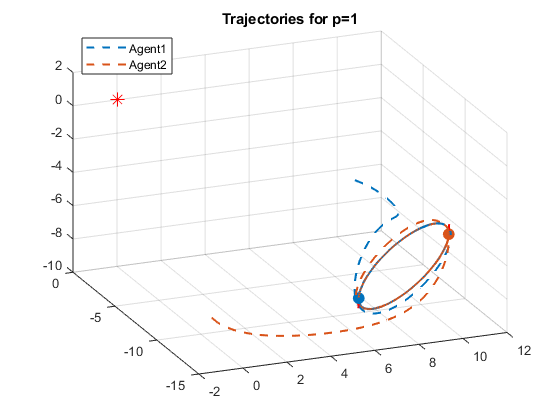}
  \caption{Illustration of the type of circling equilibrium described in Prop.~\ref{prop:existenceProp_1} ($\lambda=0.5$, $a=-0.7071$, $a_0=0$). The agents follow a circling trajectory on a plane perpendicular to the axis which passes through the beacon (denoted by the asterisk).}
  \vspace{-1.5em}
  \label{Fig:OffsetCircle}
\end{center}
\end{figure}

Figure~\ref{Fig:OffsetCircle} illustrates the type of circling equilibrium which is described in \textbf{Proposition \ref{prop:existenceProp_1}}. Note that the distance of each agent from the beacon (i.e. $\rho_{1b1}$) is determined by initial conditions, but the separation between the agents (i.e. $\rho$) is determined by the control parameters according to \eqref{eqn:a0SameAlphaCircValues}.

\subsection{General case for configuration II: $a_0 \neq 0$}
We now shift our attention to the case where $a_0 \neq 0$, and show that circling equilibria exist in this scenario as well.
\begin{proposition}
Consider a beacon-referenced mutual CB pursuit system with shape dynamics \eqref{eqn:2AgentsOneBeaconDynamicsAtEquilibrium} parametrized by $\mu$, $\lambda$, and the CB parameters $a$, $a_0$, with $a_0\neq0$. The following statements are true.
\\
\indent (a) Whenever $(1-\lambda) a + \lambda a_0 < 0$, a circling equilibrium exists, with corresponding equilibrium values given by
\begin{equation}
\begin{aligned}
\bullet \quad &
\bar{x}_1 = \bar{x}_2 = 0,
\quad 
\bar{x}_{1b1} = \bar{x}_{2b2} = 0,
\quad 
\tilde{x} = -1,
\\
\bullet \quad &
\rho_{1b1} = \rho_{2b2} 
= \frac{1}{-\mu \bigl((1-\lambda)a + \lambda a_0\bigr)},
\\
\bullet \quad &
\rho 
= 2\rho_{1b1} = \frac{2}{-\mu \bigl((1-\lambda)a + \lambda a_0\bigr)}.
\end{aligned}
\end{equation}
\\
\indent (b) Whenever $(1-\lambda) a + \lambda a_0 < 0$, $a_0 < 0$, and $a>0$, a circling equilibrium exists, with corresponding equilibrium values given by
\begin{equation}
\begin{aligned}
\bullet \quad &
\bar{x}_1 = \bar{x}_2 = 0,
\quad 
\bar{x}_{1b1} = \bar{x}_{2b2} = 0,
\quad 
\tilde{x} = 1,
\\
\bullet \quad &
\rho_{1b1} = \rho_{2b2}
= \frac{\lambda a_0}{\mu\Bigl((1-\lambda)^2 a^2 - \lambda^2 a_0^2\Bigr)},
\\
\bullet \quad &
\rho 
= \frac{-2(1-\lambda) a}{\mu\Bigl((1-\lambda)^2 a^2 - \lambda^2 a_0^2\Bigr)}.
\end{aligned}
\end{equation}
\label{prop:existenceProp_TwoAgentsOneBeacon_a0_nonzero}
\end{proposition}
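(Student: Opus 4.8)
The plan is to exhibit each of the two equilibria explicitly and verify it directly; this suffices because the proposition asserts existence rather than a complete classification. I would start from the observation already made in the text that on the common nullcline $\dot\rho=\dot\rho_{1b1}=\dot\rho_{2b2}=0$ one has $\bar{x}_2=-\bar{x}_1$ and $\bar{x}_{1b1}=\bar{x}_{2b2}=0$, so a relative equilibrium is exactly a simultaneous zero of the five right-hand sides in \eqref{eqn:2AgentsOneBeaconDynamicsAtEquilibrium}. Guided by \textbf{Proposition \ref{prop:existenceProp_1}} (the $a_0=0$ case) and by the interchange symmetry between the two agents, I would search within the ansatz $\bar{x}_1=0$ and $\rho_{1b1}=\rho_{2b2}=:\rho_0$. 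Setting $\bar{x}_1=0$ makes $\dot{\tilde{x}}$ vanish identically, and under $\rho_{1b1}=\rho_{2b2}$ both Law-of-Cosines ratios in \eqref{eqn:2AgentsOneBeaconDynamicsAtEquilibrium} collapse to the single quantity $s:=\rho/(2\rho_0)=\rOneBUnit\cdot\rUnit\in(0,1]$, the $\dot{\bar{x}}_1$ and $\dot{\bar{x}}_2$ equations become identical, and so do the $\dot{\bar{x}}_{1b1}$ and $\dot{\bar{x}}_{2b2}$ equations. The whole system then reduces to the two scalar equations
\[
(1-\lambda)\mu a + \lambda\mu a_0\, s + \frac{1-\tilde{x}}{2\rho_0 s}=0, \qquad
(1-\lambda)\mu a\, s + \frac{1}{\rho_0}+\lambda\mu a_0=0,
\]
with the remaining freedom being the choice $\tilde{x}\in\{-1,1\}$.

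For part (a) I would take $\tilde{x}=-1$. Using the second equation to substitute for $1/\rho_0$ in the first, the $(1-\lambda)\mu a$ terms cancel and one is left with $\lambda\mu a_0\,(s-1/s)=0$; since $a_0\neq0$ and $s>0$ this forces $s=1$, i.e.\ $\rho=2\rho_0$ (the degenerate but admissible configuration with the beacon at the midpoint of the agents, giving $\rOneBUnit\cdot\rUnit=1$). The second equation then yields $\rho_0=1/\bigl(-\mu((1-\lambda)a+\lambda a_0)\bigr)$, which is positive exactly when $(1-\lambda)a+\lambda a_0<0$, and $\rho=2\rho_0$ reproduces the stated value; this is precisely claim (a).

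For part (b) I would take $\tilde{x}=1$: the first equation now reads $(1-\lambda)\mu a+\lambda\mu a_0 s=0$, so $s=-(1-\lambda)a/(\lambda a_0)$, and feeding this into the second gives $1/\rho_0=\mu\bigl((1-\lambda)^2a^2-\lambda^2a_0^2\bigr)/(\lambda a_0)$, hence the claimed $\rho_{1b1}=\rho_{2b2}=\rho_0$ and $\rho=2\rho_0 s$. The final task is to confirm that the three hypotheses of (b) are exactly what make this solution admissible: $a>0$ and $a_0<0$ make $s>0$; positivity of $\rho_0$ forces the sign of $(1-\lambda)^2a^2-\lambda^2a_0^2$ to match that of $\lambda a_0<0$, i.e.\ $(1-\lambda)^2a^2<\lambda^2a_0^2$, which (as $\lambda\in(0,1)$) is equivalent to $(1-\lambda)a<-\lambda a_0$, i.e.\ to $(1-\lambda)a+\lambda a_0<0$; and the same inequality makes $s=(1-\lambda)a/(\lambda|a_0|)\in(0,1)$, so the triangle is non-degenerate and the relative-bearing inner products satisfy their $[-1,1]$ constraints. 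I would close by noting that at either point all four of $\bar{x}_1,\bar{x}_2,\bar{x}_{1b1},\bar{x}_{2b2}$ vanish, so $\dot\rho=\dot\rho_{1b1}=\dot\rho_{2b2}=0$ automatically and these are genuine equilibria of the full shape dynamics \eqref{eqn:2AgentShapeDynamicsSet1} (with $\hat{\mathbf b}=\mathbf 0$), not merely of its restriction to the nullcline.

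The bulk of the work — the algebra after the ansatz is fixed — is routine. The only nontrivial choices are (i) the ansatz $\bar{x}_1=0$, $\rho_{1b1}=\rho_{2b2}$, natural from the $a_0=0$ analysis and the problem's symmetry, and (ii) the case split $\tilde{x}=\pm1$, which is exactly what separates the two families: $\tilde{x}=-1$ forces the collinear (antipodal) configuration of (a), while $\tilde{x}=1$ gives the parallel-velocity configuration of (b). I expect the one place deserving genuine care — more a bookkeeping hurdle than a conceptual obstacle — is showing that the positivity requirements on $\rho_0$ and $\rho$ in case (b) collapse to the single inequality $(1-\lambda)a+\lambda a_0<0$, and that the resulting shape violates neither the triangle inequalities nor the $[-1,1]$ bounds on the bearing dot products. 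Because the statement is purely existential, there is no need to rule out other equilibria (e.g.\ branches with $\bar{x}_1\neq0$, with $\rho_{1b1}\neq\rho_{2b2}$, or with $\tilde{x}\in(-1,1)$), which keeps the argument short.
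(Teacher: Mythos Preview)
Your proposal is correct and follows essentially the same route as the paper's proof: both set $\bar{x}_1=0$, reduce to the symmetric situation $\rho_{1b1}=\rho_{2b2}$, arrive at the same pair of scalar equations (your two displayed equations are the paper's \eqref{eqn:equilibriumDynamicsTwoAgentA0NonZeroTerm_1}--\eqref{eqn:equilibriumDynamicsTwoAgentA0NonZeroTerm_2}), and then split on $\tilde{x}=\pm 1$ to recover parts (a) and (b). The only organizational difference is that the paper \emph{derives} $\rho_{1b1}=\rho_{2b2}$ as a necessary consequence (by subtracting the $\dot{\bar{x}}_1,\dot{\bar{x}}_2$ equations and the $\dot{\bar{x}}_{1b1},\dot{\bar{x}}_{2b2}$ equations) and invokes a cited lemma to force $\tilde{x}\in\{-1,1\}$ in the non-collinear case, whereas you simply posit both as part of an ansatz; since the proposition is purely existential, your streamlined verification is entirely sufficient, and your introduction of $s=\rho/(2\rho_0)$ makes the algebra a bit cleaner.
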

\begin{proof}
See the electronic supplementary material.
\end{proof}
\section{Configuration III: Two Agents Referencing Two Distinct Beacons}
\label{sec:twoAgentsTwoBeacons}
%
We now consider the most general case, for which two agents in mutual pursuit each reference a different beacon. For this case, our shape dynamics are given by \eqref{eqn:2AgentShapeDynamicsSet1}, \eqref{eqn:2AgentShapeDynamicsSet2} with constraints given by \eqref{eqn:x2r1b1eqnOrig}, \eqref{eqn:x1r2b2eqnOrig}, \eqref{eqn:rUnitb1Unit}, and \eqref{eqn:rUnitb2Unit}. As in previous cases, setting $\dot{\rho}=0$ and $\dot{\rho}_{1b1} = 0 = \dot{\rho}_{2b2}$ requires $\bar{x}_2 = -\bar{x}_1$ and $\bar{x}_{1b} = 0 = \bar{x}_{2b}$. If we also set $\dot{\hat{r}}_1 = 0 = \dot{\hat{r}}_2$, i.e. $\hat{x}_{1} = 0 = \hat{x}_{2}$, then our shape dynamics \eqref{eqn:2AgentShapeDynamicsSet1}-\eqref{eqn:2AgentShapeDynamicsSet2} along these nullclines are given by
\begin{align}
\label{eqn:2Agents2BeaconsDynamicsAtEquilibrium}
\dot{\bar{x}}_1
&= -(1-\lambda)\mu (\bar{x}_1 - a) \left(1 - \bar{x}_1^2 \right)  
+ \lambda\mu a_0\left(\frac{\rho_{1b1}^2 +\rho^2 - \rho_{2b2}^2 -2\hat{r}_2 }{2\rho \rho_{1b1}} \right) + \frac{1}{\rho} \Bigl(1 - \tilde{x}\Bigr),  \nonumber \\
\dot{\bar{x}}_2
&= -(1-\lambda)\mu (-\bar{x}_1 - a) \left(1 - \bar{x}_1^2 \right)  + \lambda\mu a_0\left(\frac{-\rho_{1b1}^2 +\rho^2 + \rho_{2b2}^2 +2\hat{r}_1 }{2\rho \rho_{2b2}} \right) + \frac{1}{\rho} \Bigl(1 - \tilde{x}\Bigr),  \nonumber \\
\dot{\bar{x}}_{1b1} 
&= -(1-\lambda)\Bigl(\mu (\bar{x}_1 - a)\Bigr)\left(\frac{\rho_{1b1}^2 +\rho^2 - \rho_{2b2}^2  -2\hat{r}_2}{2\rho \rho_{1b1}} \right)  + \frac{1}{\rho_{1b1}} + \lambda\mu a_0, \nonumber \\
\dot{\bar{x}}_{2b2} 
&= -(1-\lambda)\Bigl(\mu (-\bar{x}_1 - a)\Bigr)\left(\frac{-\rho_{1b1}^2 +\rho^2 + \rho_{2b2}^2  +2\hat{r}_1}{2\rho \rho_{2b2}} \right) + \frac{1}{\rho_{2b2}} + \lambda\mu a_0, \nonumber \\
\dot{\tilde{x}}
&= \mu \bar{x}_{1} \left( \lambda a_0  \rho \left(\frac{\rho_{2b2}-\rho_{1b1}}{\rho_{1b1}\rho_{2b2}}\right) 
- 2(1-\lambda) \bar{x}_1\left(1 - \tilde{x}\right)\right),
\nonumber \\
%
\dot{\hat{x}}_1 
&= -\mu\left(\frac{(1-\lambda) (\bar{x}_1 - a)}{\rho} - \frac{\lambda  a_0}{\rho_{1b1}}\right)\hat{r}_1  
+\left(\frac{(1-\lambda)\mu (\bar{x}_1 - a)}{\rho}\right)\hat{r}_2
+ \frac{2\lambda \mu a_0 b^2}{\rho_{1b1}} 
\nonumber \\
\dot{\hat{x}}_2 
&= -\mu\left(\frac{(1-\lambda) (-\bar{x}_1 - a)}{\rho} - \frac{\lambda a_0}{\rho_{2b2}}\right)\hat{r}_2  
+\left(\frac{(1-\lambda)\mu (-\bar{x}_1 - a)}{\rho}\right)\hat{r}_1
- \frac{2\lambda \mu a_0 b^2}{\rho_{2b2}}.
\end{align}
Before proceeding, we note the following helpful result.
\begin{lemma}
\label{lem:tildeXPlusMinusOne}
Let $\bar{x}_{1b} = 0 = \bar{x}_{2b}$ and $\hat{x}_{1} = 0 = \hat{x}_{2}$, with $\hat{\bf b} \neq {\bf 0}$ and $\hat{\bf b}$ not parallel to either $\mathbf{r}_{1b1}$ or $\mathbf{r}_{2b2}$. Then if $\bar{x}_1 = 0 = \bar{x}_2$, $\tilde{x}$ must either be $1$ or $-1$.
\end{lemma}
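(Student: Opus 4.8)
The plan is to show that the stated orthogonality conditions, combined with the closure constraint \eqref{eqn:closureConstraint}, force both $\mathbf{x}_1$ and $\mathbf{x}_2$ into a common one-dimensional subspace of $\RR^3$, from which $\mathbf{x}_1 = \pm\mathbf{x}_2$ follows immediately.

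First I would translate the hypotheses into plain orthogonality statements. Since we work away from collocation, the vectors $\mathbf{r}$, $\mathbf{r}_{1b1}$, $\mathbf{r}_{2b2}$ are nonzero, so: $\bar{x}_1 = 0$ means $\mathbf{x}_1 \perp \mathbf{r}$; $\bar{x}_{1b1} = 0$ means $\mathbf{x}_1 \perp \mathbf{r}_{1b1}$; and $\hat{x}_1 = 0$ means $\mathbf{x}_1 \perp \hat{\bf b}$. Symmetrically, $\mathbf{x}_2$ is orthogonal to $\mathbf{r}$, to $\mathbf{r}_{2b2}$, and to $\hat{\bf b}$. Next I would invoke the closure constraint \eqref{eqn:closureConstraint} in the form $\mathbf{r}_{1b1} = \hat{\bf b} + \mathbf{r}_{2b2} + \mathbf{r}$: dotting with $\mathbf{x}_1$ gives $\mathbf{x}_1 \cdot \mathbf{r}_{2b2} = \mathbf{x}_1\cdot\mathbf{r}_{1b1} - \mathbf{x}_1\cdot\hat{\bf b} - \mathbf{x}_1\cdot\mathbf{r} = 0$, and dotting with $\mathbf{x}_2$ gives $\mathbf{x}_2 \cdot \mathbf{r}_{1b1} = 0$ in the same way. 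Hence both $\mathbf{x}_1$ and $\mathbf{x}_2$ are orthogonal to each of $\hat{\bf b}$, $\mathbf{r}_{1b1}$, and $\mathbf{r}_{2b2}$.

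The remaining step is a dimension count. Because $\hat{\bf b} \neq \mathbf{0}$ and $\hat{\bf b}$ is not parallel to $\mathbf{r}_{1b1}$, these two vectors are linearly independent and span a two-dimensional subspace $V \subset \RR^3$, so $V^{\perp}$ is one-dimensional. Both unit vectors $\mathbf{x}_1$ and $\mathbf{x}_2$ lie in $V^{\perp}$, and a one-dimensional subspace contains exactly two unit vectors, which are negatives of one another. Therefore $\mathbf{x}_1 = \pm\mathbf{x}_2$, i.e. $\tilde{x} = \mathbf{x}_1 \cdot \mathbf{x}_2 = \pm 1$, which is the claim. (Only one of the two non-parallelism hypotheses is actually needed; I would note this in passing, or alternatively use $\{\hat{\bf b}, \mathbf{r}_{2b2}\}$ to span $V$.)

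I do not anticipate a genuine obstacle: the argument is essentially linear algebra once the closure constraint is used to upgrade ``$\mathbf{x}_1 \perp \{\hat{\bf b}, \mathbf{r}_{1b1}, \mathbf{r}\}$'' to ``$\mathbf{x}_1 \perp \{\hat{\bf b}, \mathbf{r}_{1b1}, \mathbf{r}_{2b2}\}$''. The one point requiring care is ensuring the normalized shape variables genuinely encode orthogonality, which rests on the standing assumption that the analysis is restricted away from configurations in which an agent is collocated with another agent or with a beacon, guaranteeing $\rho, \rho_{1b1}, \rho_{2b2} > 0$.
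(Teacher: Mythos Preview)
Your argument is correct and in fact cleaner than the paper's. The paper proceeds by a case split on whether $\hat{\bf b}$ is parallel to $\mathbf{r}$: when it is not, the authors use only $\bar{x}_1=\bar{x}_2=0$ and $\hat{x}_1=\hat{x}_2=0$ to conclude that both $\mathbf{x}_1$ and $\mathbf{x}_2$ are parallel to $\frac{\mathbf{r}}{|\mathbf{r}|}\times\frac{\hat{\bf b}}{|\hat{\bf b}|}$; when $\hat{\bf b}\parallel\mathbf{r}$, they fall back on the $\bar{x}_{ib}$ conditions together with the closure constraint to show both vectors are parallel to $\frac{\mathbf{r}_{1b1}}{|\mathbf{r}_{1b1}|}\times\frac{\hat{\bf b}}{|\hat{\bf b}|}$. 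You instead invoke the closure constraint \eqref{eqn:closureConstraint} at the outset to show that each $\mathbf{x}_i$ is orthogonal to all of $\hat{\bf b}$, $\mathbf{r}_{1b1}$, $\mathbf{r}_{2b2}$, $\mathbf{r}$ simultaneously, and then use the standing non-parallelism hypothesis on $\{\hat{\bf b},\mathbf{r}_{1b1}\}$ to pin down a one-dimensional orthogonal complement. This buys you a single unified argument with no case analysis, and it makes transparent your closing observation that only one of the two non-parallelism assumptions is really needed. The paper's route, by contrast, highlights that in the generic case ($\hat{\bf b}\nparallel\mathbf{r}$) the beacon-bearing conditions $\bar{x}_{1b}=\bar{x}_{2b}=0$ are superfluous for the conclusion.
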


\begin{proof}
See the electronic supplementary material.
\end{proof}

As before, we'll first consider a special case.
\subsection{Case 1 for configuration III: $a_0=0$}
If $a_0=0$, then \eqref{eqn:2Agents2BeaconsDynamicsAtEquilibrium} simplifies to 
\begin{align}
\label{eqn:2Agents2BeaconsDynamics_a0Zero_AtEquilibrium}
\dot{\bar{x}}_1
&= -(1-\lambda)\mu (\bar{x}_1 - a) \left(1 - \bar{x}_1^2 \right)  
+ \frac{1}{\rho} \Bigl(1 - \tilde{x}\Bigr),  \nonumber \\
\dot{\bar{x}}_2
&= -(1-\lambda)\mu (-\bar{x}_1 - a) \left(1 - \bar{x}_1^2 \right)  + \frac{1}{\rho} \Bigl(1 - \tilde{x}\Bigr),  \nonumber \\
\dot{\bar{x}}_{1b1} 
&= -(1-\lambda)\mu (\bar{x}_1 - a)\left(\frac{\rho_{1b1}^2 +\rho^2 - \rho_{2b2}^2  -2\hat{r}_2}{2\rho \rho_{1b1}} \right)  + \frac{1}{\rho_{1b1}} , \nonumber \\
\dot{\bar{x}}_{2b2} 
&= -(1-\lambda)\mu (-\bar{x}_1 - a)\left(\frac{-\rho_{1b1}^2 +\rho^2 + \rho_{2b2}^2  +2\hat{r}_1}{2\rho \rho_{2b2}} \right) + \frac{1}{\rho_{2b2}}, \nonumber \\
\dot{\tilde{x}}
&= - 2\mu(1-\lambda) \bar{x}_1^2\left(1 - \tilde{x}\right),
\nonumber \\
%
\dot{\hat{x}}_1 
&= -\mu\left(\frac{(1-\lambda) (\bar{x}_1 - a)}{\rho}\right)\hat{r}_1  
+\left(\frac{(1-\lambda)\mu (\bar{x}_1 - a)}{\rho}\right)\hat{r}_2 
\nonumber \\
\dot{\hat{x}}_2 
&= -\mu\left(\frac{(1-\lambda) (-\bar{x}_1 - a)}{\rho}\right)\hat{r}_2  
+\left(\frac{(1-\lambda)\mu (-\bar{x}_1 - a)}{\rho}\right)\hat{r}_1.
\end{align}
As in the same special case for configuration II, the only way that we can have $\dot{\bar{x}}_1$, $\dot{\bar{x}}_2$, and $\dot{\tilde{x}}$ all equal to zero is if $\bar{x}_1=0$. Substituting this constraint into \eqref{eqn:2Agents2BeaconsDynamics_a0Zero_AtEquilibrium} and setting equal to zero yields
\begin{subequations}
\begin{align}
(1-\lambda)\mu a  + \frac{1}{\rho} \Bigl(1 - \tilde{x}\Bigr)
&=
0,
\label{eqn:2Agent2BeaconsDynamicsAtEquilibriumCommonCBParameter_a0Zero__1}
\\
(1-\lambda) \mu a\left(\frac{\rho_{1b1}^2 +\rho^2 - \rho_{2b2}^2 -2\hat{r}_2}{2\rho \rho_{1b1}} \right)
&=
\frac{-1}{\rho_{1b1}}, 
\label{eqn:2Agent2BeaconsDynamicsAtEquilibriumCommonCBParameter_a0Zero__2}
\\
(1-\lambda) \mu a\left(\frac{\rho_{2b2}^2 +\rho^2 - \rho_{1b1}^2 +2\hat{r}_1}{2\rho \rho_{2b2}} \right)
&=
\frac{-1}{\rho_{2b2}}.
\label{eqn:2Agent2BeaconsDynamicsAtEquilibriumCommonCBParameter_a0Zero__3}
\\
\left(\frac{(1-\lambda) \mu a}{\rho}\right)\left(\hat{r}_1 - \hat{r}_2\right) 
&=
0,
\label{eqn:2Agent2BeaconsDynamicsAtEquilibriumCommonCBParameter_a0Zero__4}
\end{align}
\end{subequations}
and since $a=0$ is not valid (based on the fact that $\rho_{1b1}$ and $\rho_{2b2}$ are nonzero and finite), we must have $\hat{r}_1=\hat{r}_2$, $\tilde{x} = -1$ (based on \eqref{eqn:2Agent2BeaconsDynamicsAtEquilibriumCommonCBParameter_a0Zero__1} and \textbf{Lemma \ref{lem:tildeXPlusMinusOne}}), and $\rho = \frac{2}{(1-\lambda)\mu (-a)}$. Then substituting these values into \eqref{eqn:2Agent2BeaconsDynamicsAtEquilibriumCommonCBParameter_a0Zero__2} and \eqref{eqn:2Agent2BeaconsDynamicsAtEquilibriumCommonCBParameter_a0Zero__3} yields the following result.
\begin{proposition}
Consider a beacon-referenced mutual CB pursuit system with beacon positioning parameter $b$ and shape dynamics \eqref{eqn:2AgentShapeDynamicsSet1},\eqref{eqn:2AgentShapeDynamicsSet2} parametrized by $\mu$, $\lambda$, and CB parameters $a$ and $a_0=0$. Then, a \textit{circling equilibrium} exists if and only if $a<0$, and the corresponding equilibrium values satisfy
\begin{equation}
\begin{aligned}
\bullet \quad &
\bar{x}_1 = \bar{x}_2 = 0,
\quad 
\bar{x}_{1b1} = \bar{x}_{2b2} = 0,
\quad 
\hat{x}_1 = \hat{x}_2 = 0,
\quad 
\tilde{x} = -1,
\\
\bullet \quad &
\rho = \frac{2}{(1-\lambda)\mu (-a)},
\quad
\hat{r}_1=\hat{r}_2,
\quad
\rho_{1b1}^2 - \rho_{2b2}^2 -2\hat{r}_1 = 0.
\end{aligned}
\label{eqn:TwoBeaconTwoAgent_a0_zero_CircValues}
\end{equation}
\label{prop:existencePropConfig3Case1}
\end{proposition}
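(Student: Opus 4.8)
The plan is to complete the reduction already performed in the paragraph preceding the statement, which narrows the search for circling equilibria of \eqref{eqn:2AgentShapeDynamicsSet1}--\eqref{eqn:2AgentShapeDynamicsSet2} with $a_0=0$ down to the algebraic system \eqref{eqn:2Agent2BeaconsDynamicsAtEquilibriumCommonCBParameter_a0Zero__1}--\eqref{eqn:2Agent2BeaconsDynamicsAtEquilibriumCommonCBParameter_a0Zero__4} and shows that any such equilibrium must satisfy $\bar x_1=\bar x_2=0$, $\bar x_{1b1}=\bar x_{2b2}=0$, $\hat x_1=\hat x_2=0$, $\hat r_1=\hat r_2$, $\tilde x=-1$, and $\rho=2/((1-\lambda)\mu(-a))$. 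What then remains is threefold: (i) extract the ``only if'' direction, namely $a<0$; (ii) show that the two beacon-bearing equations \eqref{eqn:2Agent2BeaconsDynamicsAtEquilibriumCommonCBParameter_a0Zero__2} and \eqref{eqn:2Agent2BeaconsDynamicsAtEquilibriumCommonCBParameter_a0Zero__3} collapse to the single geometric relation $\rho_{1b1}^2-\rho_{2b2}^2-2\hat r_1=0$; and (iii) establish the ``if'' direction by exhibiting a configuration that realizes all of the stated relations simultaneously.

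Step (i) is immediate: since $\mu>0$, $\lambda\in(0,1)$, and $\rho>0$, the identity $\rho=2/((1-\lambda)\mu(-a))$ forces $-a>0$. For completeness I would also spell out why the alternative $\bar x_1=\pm1$ (which a priori also annihilates $\dot{\bar x}_1$, $\dot{\bar x}_2$, $\dot{\tilde x}$) is impossible under the standing nullcline conditions: together with $\bar x_{1b1}=\bar x_{2b2}=\hat x_1=\hat x_2=0$ it would make $\mathbf x_1=\mathbf x_2=\pm\mathbf r/|\mathbf r|$ orthogonal to each of $\mathbf r_{1b1}$, $\mathbf r_{2b2}$, and $\hat{\bf b}$, hence, via \eqref{eqn:closureConstraint}, orthogonal to $\mathbf r=\mathbf r_{1b1}-\mathbf r_{2b2}-\hat{\bf b}$ itself, forcing $\mathbf r=\mathbf 0$, i.e.\ collocation, which is excluded. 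For step (ii) I would substitute $(1-\lambda)\mu a=-2/\rho$ (from \eqref{eqn:2Agent2BeaconsDynamicsAtEquilibriumCommonCBParameter_a0Zero__1} with $\tilde x=-1$) into \eqref{eqn:2Agent2BeaconsDynamicsAtEquilibriumCommonCBParameter_a0Zero__2}; multiplying through by $\rho_{1b1}$ and simplifying turns it into $-(\rho_{1b1}^2+\rho^2-\rho_{2b2}^2-2\hat r_2)/\rho^2=-1$, i.e.\ $\rho_{1b1}^2-\rho_{2b2}^2-2\hat r_2=0$, which with $\hat r_1=\hat r_2$ is the claimed relation; the identical manipulation of \eqref{eqn:2Agent2BeaconsDynamicsAtEquilibriumCommonCBParameter_a0Zero__3} returns the same equation, so the two impose only one constraint, and \eqref{eqn:2Agent2BeaconsDynamicsAtEquilibriumCommonCBParameter_a0Zero__4} holds automatically once $\hat r_1=\hat r_2$.

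For step (iii), given any $a<0$ I would place the two agents at a common height $h$ on antipodal points of a circle of radius $\rho/2$ about the $z$-axis (with $\rho=2/((1-\lambda)\mu(-a))$), each velocity pointing tangentially in a common rotational sense. Using $\hat{\bf b}=(0,0,2b)$, a short computation then gives $\bar x_1=\bar x_2=0$, $\tilde x=-1$, $\bar x_{1b1}=\bar x_{2b2}=\hat x_1=\hat x_2=0$, $\hat r_1=\hat r_2=2bh$, and $\rho_{1b1}^2-\rho_{2b2}^2=(h+b)^2-(h-b)^2=4bh=2\hat r_1$, so all of \eqref{eqn:2Agent2BeaconsDynamicsAtEquilibriumCommonCBParameter_a0Zero__1}--\eqref{eqn:2Agent2BeaconsDynamicsAtEquilibriumCommonCBParameter_a0Zero__4} are satisfied; and since the agents' distances to both beacons are then constant, each trajectory is a genuine circle. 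The one point I expect to need real care --- the main obstacle --- is exactly this realizability: confirming that the relations $\hat r_1=\hat r_2$ and $\rho_{1b1}^2-\rho_{2b2}^2=2\hat r_1$ are jointly realizable by a non-degenerate (circular rather than rectilinear) configuration for every admissible $a<0$, which the antipodal construction settles; everything else is the routine bookkeeping already implicit in the preceding discussion.
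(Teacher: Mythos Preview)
Your proposal is correct and follows essentially the same line as the paper, which simply says ``Follows from previous discussion'' and relies on the reduction to \eqref{eqn:2Agent2BeaconsDynamicsAtEquilibriumCommonCBParameter_a0Zero__1}--\eqref{eqn:2Agent2BeaconsDynamicsAtEquilibriumCommonCBParameter_a0Zero__4} together with Lemma~\ref{lem:tildeXPlusMinusOne}. Your write-up is in fact more careful than the paper on two points: your geometric exclusion of $\bar{x}_1=\pm 1$ via the closure constraint \eqref{eqn:closureConstraint} closes a small gap (the paper's claim that only $\bar{x}_1=0$ yields $\dot{\tilde{x}}=0$ overlooks the combination $\bar{x}_1=\pm 1$, $\tilde{x}=1$), and your explicit antipodal construction in step~(iii) supplies the ``if'' direction that the paper leaves to the reader.
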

\begin{proof}
Follows from previous discussion.
\end{proof}


We note that if we express $\mathbf{r}_1$ component-wise as $(r_{11}, r_{12}, r_{13})$, then it follows that $r_{13} = \hat{r}_{1}/(2b)$. Therefore the circling equilibria described by \textbf{Proposition \ref{prop:existencePropConfig3Case1}} involve both agents maneuvering in the same plane orthogonal to $\hat{b}$, with circling radius given by $\rho/2 = \frac{1}{(1-\lambda)\mu (-a)}$, and $r_{13} = r_{23}$ determined by initial conditions. Typical circling equilibria corresponding to \textbf{Proposition \ref{prop:existencePropConfig3Case1}} for two different sets of initial conditions are depicted in Figure \ref{fig:TwoBeaconFlatCircle}. 
\begin{figure}
\centering
\begin{subfigure}{.5\textwidth}
  \centering
  \includegraphics[width=.9\linewidth]{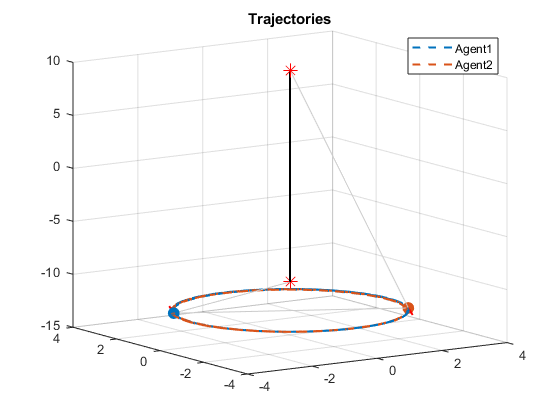}
  \label{fig:sub1}
\end{subfigure}%
\begin{subfigure}{.5\textwidth}
  \centering
  \includegraphics[width=.9\linewidth]{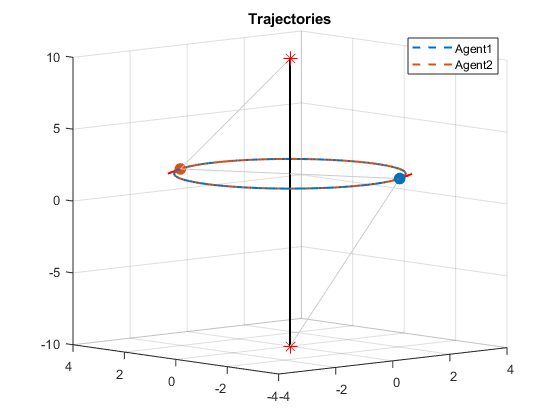}
  \label{fig:sub2}
\end{subfigure}
\caption{These figures depict circling equilibria for configuration III with beacon positioning $b=10$ and control parameters $a_0=0$, $a = -.771$, $\lambda = 0.57$, and $\mu = 1$, for two different sets of initial conditions.}
\label{fig:TwoBeaconFlatCircle}
\end{figure}
%
%
%
\subsection{Case 2 for configuration III: Stacked Circling Equilibria}
We now return to the dynamics given by \eqref{eqn:2Agents2BeaconsDynamicsAtEquilibrium} and assume $a_0 \neq 0$. From the form of the $\dot{\tilde{x}}$ expression we note that we can set it equal to zero by choosing $\bar{x}_1=0$. Then \eqref{eqn:2Agents2BeaconsDynamicsAtEquilibrium} simplifies to 
\begin{align}
\label{eqn:2Agents2Beacons_A0_notZero_but_X0_Zero}
\dot{\bar{x}}_1
&= (1-\lambda)\mu a  
+ \lambda\mu a_0\left(\frac{\rho_{1b1}^2 +\rho^2 - \rho_{2b2}^2 -2\hat{r}_2 }{2\rho \rho_{1b1}} \right) + \frac{1}{\rho} \Bigl(1 - \tilde{x}\Bigr),  \nonumber \\
\dot{\bar{x}}_2
&= (1-\lambda)\mu a  + \lambda\mu a_0\left(\frac{-\rho_{1b1}^2 +\rho^2 + \rho_{2b2}^2 +2\hat{r}_1 }{2\rho \rho_{2b2}} \right) + \frac{1}{\rho} \Bigl(1 - \tilde{x}\Bigr),  \nonumber \\
\dot{\bar{x}}_{1b1} 
&= (1-\lambda)\mu a\left(\frac{\rho_{1b1}^2 +\rho^2 - \rho_{2b2}^2  -2\hat{r}_2}{2\rho \rho_{1b1}} \right)  + \frac{1}{\rho_{1b1}} + \lambda\mu a_0, \nonumber \\
\dot{\bar{x}}_{2b2} 
&= (1-\lambda)\mu a\left(\frac{-\rho_{1b1}^2 +\rho^2 + \rho_{2b2}^2  +2\hat{r}_1}{2\rho \rho_{2b2}} \right) + \frac{1}{\rho_{2b2}} + \lambda\mu a_0, \nonumber \\
\dot{\hat{x}}_1 
&= \mu\left(\frac{(1-\lambda) a}{\rho} + \frac{\lambda  a_0}{\rho_{1b1}}\right)\hat{r}_1  
-\left(\frac{(1-\lambda)\mu a}{\rho}\right)\hat{r}_2
+ \frac{2\lambda \mu a_0 b^2}{\rho_{1b1}} 
\nonumber \\
\dot{\hat{x}}_2 
&= \mu\left(\frac{(1-\lambda) a}{\rho} + \frac{\lambda a_0}{\rho_{2b2}}\right)\hat{r}_2  
-\left(\frac{(1-\lambda)\mu a}{\rho}\right)\hat{r}_1
- \frac{2\lambda \mu a_0 b^2}{\rho_{2b2}}.
\end{align}
We note that taking the difference of $\dot{\bar{x}}_1-\dot{\bar{x}}_2$ and the difference of $\dot{\bar{x}}_{1b1} - \dot{\bar{x}}_{2b2}$ yields
\begin{align}
\label{eqn:x1DotDiff}
\dot{\bar{x}}_1-\dot{\bar{x}}_2 
&=
\frac{\lambda\mu a_0}{2\rho}\left(\frac{\rho_{1b1}^2 + \rho^2 - \rho_{2b2}^2 -2\hat{r}_2}{\rho_{1b1}} + \frac{\rho_{1b1}^2 - \rho^2 - \rho_{2b2}^2-2\hat{r}_1}{\rho_{2b2}} \right) 
\\
\label{eqn:x1bDotDiff}
\dot{\bar{x}}_{1b1} - \dot{\bar{x}}_{2b2}
&=
\frac{(1-\lambda)\mu a}{2\rho}\left(\frac{\rho_{1b1}^2 + \rho^2 - \rho_{2b2}^2 -2\hat{r}_2}{\rho_{1b1}} + \frac{\rho_{1b1}^2 - \rho^2 - \rho_{2b2}^2-2\hat{r}_1}{\rho_{2b2}} \right) 
+ \left(\frac{1}{\rho_{1b1}} - \frac{1}{\rho_{2b2}} \right) 
\end{align}
and since $a_0 \neq 0$, setting both equations equal to zero yields the requirement $\rho_{1b1} = \rho_{2b2}$. Substituting this equivalence back into \eqref{eqn:x1DotDiff} and setting equal to zero then requires $\hat{r}_2 = -\hat{r}_1$. Under these constraints,  \eqref{eqn:2Agents2Beacons_A0_notZero_but_X0_Zero} becomes
\begin{align}
\label{eqn:2Agents2Beacons_A0_notZero_but_X0_Zero_V2}
\dot{\bar{x}}_1
&= (1-\lambda)\mu a  
+ \lambda\mu a_0\left(\frac{\rho^2 +2\hat{r}_1 }{2\rho \rho_{1b1}} \right) + \frac{1}{\rho} \Bigl(1 - \tilde{x}\Bigr),  \nonumber \\
\dot{\bar{x}}_{1b1} 
&= (1-\lambda)\mu a\left(\frac{\rho^2 +2\hat{r}_1}{2\rho \rho_{1b1}} \right)  + \frac{1}{\rho_{1b1}} + \lambda\mu a_0, \nonumber \\
\dot{\hat{x}}_1 
&= \mu\left(\frac{2(1-\lambda) a}{\rho} + \frac{\lambda  a_0}{\rho_{1b1}}\right)\hat{r}_1  
+ \frac{2\lambda \mu a_0 b^2}{\rho_{1b1}}.
\end{align}
Based on \textbf{Lemma \ref{lem:tildeXPlusMinusOne}}, $\tilde{x}$ must either be $1$ or $-1$. 


\noindent
\textbullet \textbf{ Consider the case where $\mathbf{\tilde{x}=1}$.} Then setting the first equation in \eqref{eqn:2Agents2Beacons_A0_notZero_but_X0_Zero_V2} to zero yields
\begin{align}
     \left(\frac{\rho^2 +2\hat{r}_1}{2\rho\rho_{1b1}}\right)
     = -\left(\frac{1-\lambda}{\lambda}\right) \left(\frac{a}{a_0}\right),
\end{align}
from which substitution into the second equation in  \eqref{eqn:2Agents2Beacons_A0_notZero_but_X0_Zero_V2} and setting equal to zero yields
\begin{align}
    \frac{1}{\rho_{1b1}}
    &= \mu\left(\frac{(1-\lambda)^2 a^2}{\lambda a_0} \right) -\lambda \mu a_0 = \mu \left(\frac{(1-\lambda)^2 a^2 - \lambda^2 a_0^2}{\lambda a_0} \right),
\end{align}
i.e.
\begin{align}
    \rho_{1b1} = \frac{\lambda a_0}{\mu\Bigl((1-\lambda)^2 a^2 - \lambda^2 a_0^2\Bigr)}.
\end{align}
Note that this requires $\frac{a_0}{(1-\lambda)^2 a^2 - \lambda^2 a_0^2} > 0$. Denoting 
\begin{equation}
    \Phi =  \frac{(1-\lambda)a}{\mu\Bigl((1-\lambda)^2 a^2 - \lambda^2 a_0^2\Bigr)}
\end{equation}
so that $\rho_{1b1} = \frac{\lambda a_0}{(1-\lambda)a}\Phi$, we can substitute back into the first and third equations in \eqref{eqn:2Agents2Beacons_A0_notZero_but_X0_Zero_V2} and set them to zero to obtain
\begin{align}
    0 &= 2\rho\Phi + \rho^2 + 2\hat{r}_1, \nonumber \\
    0 &= 2\hat{r}_1 \Phi + \rho(\hat{r}_1 + 2 b^2).
\end{align}
From the first equation we have $\hat{r}_1 = -\rho\left(\frac{\rho}{2}+\Phi \right)$, and substitution into the second equation results in 
\begin{align}
    0 &= -2\rho\left(\frac{\rho}{2}+\Phi \right) \Phi + \rho\left(-\rho\left(\frac{\rho}{2}+\Phi \right) + 2 b^2\right)
    = -\frac{\rho}{2}\Bigl(\rho^2 + 4\Phi \rho + 4\Phi^2 - 4b^2 \Bigr).
\end{align}
Solving the quadratic equation inside the parentheses then results in 
\begin{equation}
\label{eqn:rhoEquilibriumValue2Agents2Beacons_A0_notZero_but_X0_Zero}
\rho = 2\left(-\Phi \pm b \right),
\end{equation}
which implies that $\frac{\rho}{2}+\Phi = \pm b$, and therefore we also have
\begin{equation}
\label{eqn:r1hatEquilibriumValue2Agents2Beacons_A0_notZero_but_X0_Zero}    
\hat{r}_1 = \mp b \rho = 2b\left(-b \pm \Phi \right).
\end{equation}
(Note that the choice of $+$ or $-$ in the $\hat{r}_1$ equation must match the choice from the $\rho$ equation, i.e. we have only two options rather than four combinations.) To characterize existence conditions, we must ensure that \eqref{eqn:rhoEquilibriumValue2Agents2Beacons_A0_notZero_but_X0_Zero} satisfies $\rho > 0$ and that the constraint \eqref{eqn:rUnitb2Unit} (equivalent to \eqref{eqn:rUnitb1Unit} in the present case) is satisfied. Substituting $\rho_{1b1} = \rho_{2b2} = \frac{\lambda a_0}{(1-\lambda)a}\Phi$ into \eqref{eqn:rUnitb2Unit} results in the requirement
\begin{align}
    \frac{-(1-\lambda)a}{2\rho\lambda a_0 \Phi}\left(\rho^2  +2\hat{r}_1\right) \in [-1,1],
\end{align}
and since $\rho^2  +2\hat{r}_1 = -2\rho\Phi$, we obtain the parameter constraint
\begin{equation}
    \left|\frac{(1-\lambda)a}{\lambda a_0} \right| < 1.
\end{equation}
Note that this implies that the denominator of $\Phi$ is negative, and therefore $\sgn(\Phi) = -\sgn(a)$. Since the denominator of $\rho_{1b1}$ is the same as the denominator of $\Phi$ and we require $\rho_{1b1} = \rho_{2b2} > 0$, we must have $a_0 < 0$. Then in order for $\rho>0$, then we either must have $\Phi < 0$ (i.e. $\sgn(a)>0$) or $\Phi >0$ (i.e. $\sgn(a)<0$) with $b>\Phi$. Representative trajectories for this case are depicted in Figure \ref{fig:TwoBeaconStackedCircle}.

\begin{figure}
\centering
\begin{subfigure}{.5\textwidth}
  \centering
  \includegraphics[width=.9\linewidth]{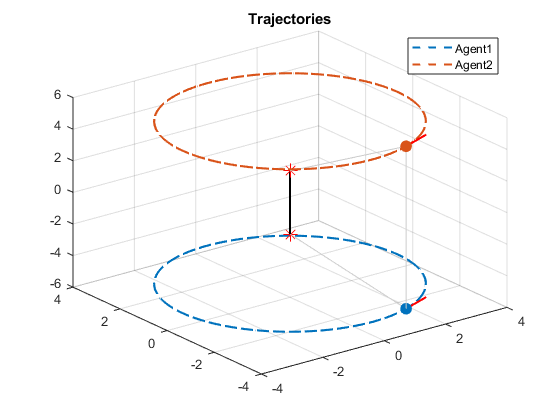}
  \label{fig:sub1}
\end{subfigure}%
\begin{subfigure}{.5\textwidth}
  \centering
  \includegraphics[width=.9\linewidth]{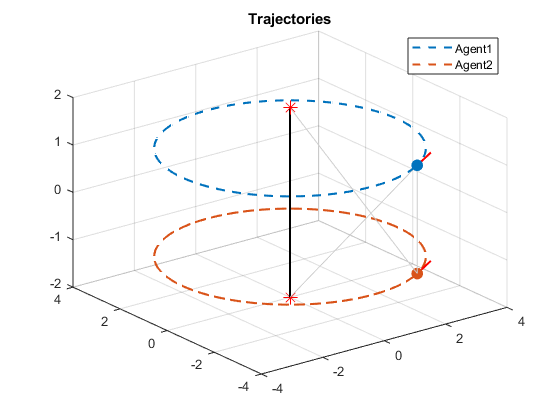}
  \label{fig:sub2}
\end{subfigure}
\caption{These figures depict circling equilibria for configuration III with beacon positioning $b=2$ and control parameters $a_0=-.707$, $a = .707$, $\lambda = 0.6$, and $\mu = 0.9$, with two different sets of initial conditions. Note that since $\Phi \approx -5.66$ is negative, we have two possible steady-state configurations for this set of parameters. If $\Phi>0$, then there would only be one possible steady-state configuration.}
\label{fig:TwoBeaconStackedCircle}
\end{figure}

We now return to our earlier choice for $\tilde{x}$ and consider the other option.


\noindent
\textbullet \textbf{ Consider the case where $\mathbf{\tilde{x}=-1}$.} As we have done in a previous context, we will denote
\begin{equation}
    \tilde{a}_{0} \triangleq \lambda\mu a_0 \in \mathds{R}, \qquad \tilde{a} \triangleq (1-\lambda)\mu a \in \mathds{R}.
\end{equation}
Then substituting $\tilde{x}=-1$ into \eqref{eqn:2Agents2Beacons_A0_notZero_but_X0_Zero_V2} and setting all three equations to zero (and eliminating denominators) yields
\begin{align}
0
&= 2 \tilde{a}\rho\rho_{1b1} + \tilde{a}_0 \left(\rho^2 +2\hat{r}_1\right) + 4\rho_{1b1},   
\label{eqn:2Agents2Beacons_A0_notZero_but_X0_Zero_V3_eqn1}
\\
0
&= 2 \tilde{a}_{0}\rho\rho_{1b1} + \tilde{a}\left(\rho^2 +2\hat{r}_1\right) + 2\rho,  
\label{eqn:2Agents2Beacons_A0_notZero_but_X0_Zero_V3_eqn2}
\\
0
&= 2 \tilde{a}\rho_{1b1}\hat{r}_1 + \tilde{a}_{0} \rho \hat{r}_1 + 2\tilde{a}_0 \rho b^2 .
\label{eqn:2Agents2Beacons_A0_notZero_but_X0_Zero_V3_eqn3}
\end{align}
As in the previous case, we have $\rho_{1b1} = \rho_{2b2}$ and $\hat{r}_2 = -\hat{r}_1$. Taking the sum and difference of \eqref{eqn:2Agents2Beacons_A0_notZero_but_X0_Zero_V3_eqn1} and \eqref{eqn:2Agents2Beacons_A0_notZero_but_X0_Zero_V3_eqn2} results in
\begin{align}
0
&= \left(\tilde{a}_0 + \tilde{a} \right) \left(2\rho\rho_{1b1} + \rho^2 +2\hat{r}_1\right) + 2\left(2\rho_{1b1} + \rho\right),   
\label{eqn:2Agents2Beacons_A0_notZero_but_X0_Zero_V4_eqn1}
\\
0
&= \left(\tilde{a}_0 - \tilde{a} \right) \left(-2\rho\rho_{1b1} + \rho^2 +2\hat{r}_1\right) + 2\left(2\rho_{1b1} - \rho\right),  
\label{eqn:2Agents2Beacons_A0_notZero_but_X0_Zero_V4_eqn2}
\end{align}
which suggests the change of variables $\rho_{_+} = \frac{\rho}{2} + \rho_{1b1}$ and $\rho_{_-} = \frac{\rho}{2} - \rho_{1b1}$, i.e. $\rho = \rho_{_+} + \rho_{_-}$ and $\rho_{1b1} = (\rho_{_+} - \rho_{_-})/2$. In terms of this notation, we have $2\rho\rho_{1b1} = \rho_{_+}^2 - \rho{-}^2$ and $\rho^2 = (\rho_{_+} + \rho_{_-})^2$, and therefore we can express \eqref{eqn:2Agents2Beacons_A0_notZero_but_X0_Zero_V4_eqn1}, \eqref{eqn:2Agents2Beacons_A0_notZero_but_X0_Zero_V4_eqn2}, and \eqref{eqn:2Agents2Beacons_A0_notZero_but_X0_Zero_V3_eqn3} as
\begin{align}
0
&= \left(\tilde{a}_0 + \tilde{a} \right) \left(\rho_{_+}^2 + \rho_{_+}\rho_{_-}  +\hat{r}_1\right) + 2\rho_{_+},   
\label{eqn:2Agents2Beacons_A0_notZero_but_X0_Zero_V5_eqn1}
\\
0
&=  \left(\tilde{a}_0 - \tilde{a} \right) \left(\rho_{_-}^2 + \rho_{_+}\rho_{_-}  +\hat{r}_1\right) - 2\rho_{_-},  
\label{eqn:2Agents2Beacons_A0_notZero_but_X0_Zero_V5_eqn2}
\\
0
&= (\tilde{a}_0 + \tilde{a})\hat{r}_1\rho_{_+} + (\tilde{a}_0 - \tilde{a})\hat{r}_1\rho_{_-} + 2\tilde{a}_0 b^2 (\rho_{_+} + \rho_{_-})  .
\label{eqn:2Agents2Beacons_A0_notZero_but_X0_Zero_V5_eqn3}
\end{align}
Before proceeding, we note that $\rho_{_+}$ must be positive (since $\rho>0$ and $\rho_{1b1}>0$), and therefore \eqref{eqn:2Agents2Beacons_A0_notZero_but_X0_Zero_V5_eqn1} implies that $\rho_{_+}^2 + \rho_{_+}\rho_{_-}  +\hat{r}_1$ must be nonzero. Additionally, if $\rho_{_-}^2 + \rho_{_+}\rho_{_-}  +\hat{r}_1=0$, then \eqref{eqn:2Agents2Beacons_A0_notZero_but_X0_Zero_V5_eqn2} requires $\rho_{_-} = 0$, and the combination of the two constraints results in $\hat{r}_1 = 0$. But substituting both of these values into \eqref{eqn:2Agents2Beacons_A0_notZero_but_X0_Zero_V5_eqn3} results in $2\tilde{a}_0 b^2 \rho_{_+} = 0$, which is not possible since $\tilde{a}_0$, $b$, and $\rho_{_+}$ are all nonzero by assumption. Therefore it is valid to rearrange \eqref{eqn:2Agents2Beacons_A0_notZero_but_X0_Zero_V5_eqn1} and \eqref{eqn:2Agents2Beacons_A0_notZero_but_X0_Zero_V5_eqn2} to obtain
\begin{align}
\label{eqn:aTildeSumAndDiffEqns}
    \tilde{a}_0 + \tilde{a} &= \frac{-2\rho_{_+}}{\rho_{_+}^2 + \rho_{_+}\rho_{_-}  +\hat{r}_1}, \nonumber \\
    \tilde{a}_0 - \tilde{a} &= \frac{2\rho_{_-}}{\rho_{_-}^2 + \rho_{_+}\rho_{_-}  +\hat{r}_1}.
\end{align}
Noting that summing the two equations in \eqref{eqn:aTildeSumAndDiffEqns} yields $2\tilde{a}_0$, we can substitute these expressions into \eqref{eqn:2Agents2Beacons_A0_notZero_but_X0_Zero_V5_eqn3} to obtain
\begin{align}
 0
&= 2\hat{r}_1\left(\frac{-\rho_{_+}^2}{\rho_{_+}^2 + \rho_{_+}\rho_{_-}  +\hat{r}_1} + \frac{\rho_{_-}^2}{\rho_{_-}^2 + \rho_{_+}\rho_{_-}  +\hat{r}_1}\right)
+ 2b^2 (\rho_{_+} + \rho_{_-})\left(\frac{\rho_{_-}}{\rho_{_-}^2 + \rho_{_+}\rho_{_-}  +\hat{r}_1} - \frac{\rho_{_+}}{\rho_{_+}^2 + \rho_{_+}\rho_{_-}  +\hat{r}_1}\right),
\end{align}
which implies that
\begin{align}
0 &= \hat{r}_1\biggl(-\rho_{_+}^2 \left(\rho_{_-}^2 + \rho_{_+}\rho_{_-}  +\hat{r}_1\right) + \rho_{_-}^2 \left(\rho_{_+}^2 + \rho_{_+}\rho_{_-}  +\hat{r}_1\right)\biggr) 
  \nonumber \\
  &\qquad
  + b^2 (\rho_{_+} + \rho_{_-})\biggl(\rho_{_-} \left(\rho_{_+}^2 + \rho_{_+}\rho_{_-}  +\hat{r}_1\right) - \rho_{_+} \left(\rho_{_-}^2 + \rho_{_+}\rho_{_-}  +\hat{r}_1\right)\biggr)  \nonumber \\
 &= \hat{r}_1 \left(\rho_{_+}\rho_{_-}  +\hat{r}_1\right) \left(\rho_{_-}^2 - \rho_{_+}^2 \right) 
  + b^2 (\rho_{_+} + \rho_{_-}) (\rho_{_-} - \rho_{_+})\hat{r}_1  \nonumber \\
 &= \hat{r}_1 \left(\rho_{_-}^2 - \rho_{_+}^2 \right)\left(\rho_{_+}\rho_{_-}  +\hat{r}_1  + b^2 \right).
\end{align}
Since $\rho_{_-}^2 - \rho_{_+}^2 = -2\rho\rho_{1b1} \neq 0$ and it is straightforward to show that $\hat{r}_1=0$ leads to a contradiction, we must have $\hat{r}_1 = -\rho_{_+}\rho_{_-}  -b^2$. Substitution into \eqref{eqn:2Agents2Beacons_A0_notZero_but_X0_Zero_V5_eqn1} and \eqref{eqn:2Agents2Beacons_A0_notZero_but_X0_Zero_V5_eqn2} then results in
\begin{align}
0
&= \left(\tilde{a}_0 + \tilde{a} \right) \left(\rho_{_+}^2 -b^2\right) + 2\rho_{_+},   
\label{eqn:2Agents2Beacons_A0_notZero_but_X0_Zero_V6_eqn1}
\\
0
&=  \left(\tilde{a}_0 - \tilde{a} \right) \left(\rho_{_-}^2 -b^2\right) - 2\rho_{_-}.  
\label{eqn:2Agents2Beacons_A0_notZero_but_X0_Zero_V6_eqn2}
\end{align}
If we represent our constraint \eqref{eqn:rUnitb2Unit} in terms of the $\rho_{_+}$ and $\rho_{_-}$ variables, it is straightforward to obtain the requirement $\rho_{_+}^2 > b^2 > \rho_{_-}^2$. Thus the form of \eqref{eqn:2Agents2Beacons_A0_notZero_but_X0_Zero_V6_eqn1} imposes the requirement $\tilde{a}_0 + \tilde{a} < 0$, and solving the quadratic equation in $\rho_{_+}$ (and selecting the only option that corresponds to $\rho_{_+}>0$) leads to
\begin{equation}
    \rho_{_+} = \frac{1}{-(\tilde{a}_0 + \tilde{a})} \left(1 + \sqrt{1 + \Bigl((\tilde{a}_0 + \tilde{a})b\Bigr)^2} \right).
\end{equation}
If $\tilde{a}_0 - \tilde{a} = 0$, then \eqref{eqn:2Agents2Beacons_A0_notZero_but_X0_Zero_V6_eqn2} requires $\rho_{_-}=0$. If $\tilde{a}_0 - \tilde{a} \neq 0$, then \eqref{eqn:2Agents2Beacons_A0_notZero_but_X0_Zero_V6_eqn2} can be arranged in the form
\begin{equation}
    \rho_{_-}^2 -\left(\frac{2}{\tilde{a}_0 - \tilde{a}}\right) \rho_{_-} - b^2 = 0,
\end{equation}
with constraint \eqref{eqn:rUnitb2Unit} requiring $(\tilde{a}_0 - \tilde{a})\rho_{_-}  < 0$ to ensure $b^2 > \rho_{_-}^2$. This leads to the result
\begin{equation}
    \rho_{_-} = \frac{1}{\tilde{a}_0 - \tilde{a}} \left(1 - \sqrt{1 + \Bigl((\tilde{a}_0 - \tilde{a})b\Bigr)^2} \right).
\end{equation}
The following proposition summarizes the results of this entire subsection. 

\begin{proposition}
Consider a beacon-referenced mutual CB pursuit system with beacon positioning parameter $b$ and shape dynamics \eqref{eqn:2AgentShapeDynamicsSet1}-\eqref{eqn:2AgentShapeDynamicsSet2} parametrized by $\mu$, $\lambda$, and CB parameters $a$ and $a_0 \neq 0$, and define
\begin{equation}
    \Phi =  \frac{(1-\lambda)a}{\mu\Bigl((1-\lambda)^2 a^2 - \lambda^2 a_0^2\Bigr)}
\end{equation}
Circling equilibria exist under the following conditions, with equilibrium values in each case given by
\begin{equation}
\begin{aligned}
\bullet \quad &
\bar{x}_1 = \bar{x}_2 = 0,
\quad 
\bar{x}_{1b} = \bar{x}_{2b} = 0.
\end{aligned}
\end{equation}
\indent (a) If $a_0 <0$, $a<0$, $(1-\lambda)^2 a^2 - \lambda^2 a_0^2 < 0$, and $\Phi < b$, a circling equilibrium exists with corresponding equilibrium values given by
\begin{equation}
\begin{aligned}
\bullet \quad &
\tilde{x} = 1,
\quad 
\hat{r}_1 = -\hat{r}_2 = 2b\left(-b + \Phi \right),
\\
\bullet \quad &
\rho = 2\left(-\Phi + b \right),
\quad
\rho_{1b1} = \rho_{2b2} 
= \left(\frac{\lambda a_0}{(1-\lambda)a}\right)\Phi.
\end{aligned}
\end{equation}
\\
\indent (b) If $a_0 <0$, $a>0$, and $(1-\lambda)^2 a^2 - \lambda^2 a_0^2$, a pair of circling equilibria exist with corresponding equilibrium values given by
\begin{equation}
\begin{aligned}
\bullet \quad &
\tilde{x} = 1,
\quad 
\hat{r}_1 = -\hat{r}_2 = 2b\left(-b \pm \Phi \right),
\\
\bullet \quad &
\rho = 2\left(-\Phi \pm b \right),
\quad
\rho_{1b1} = \rho_{2b2} 
= \left(\frac{\lambda a_0}{(1-\lambda)a}\right)\Phi.
\end{aligned}
\end{equation}
\\
\indent (c) If $a_0 <0$, and $(1-\lambda) a = \lambda a_0$, a circling equilibrium exists with corresponding equilibrium values given by
\begin{equation}
\begin{aligned}
\bullet \quad &
\tilde{x} = -1,
\quad 
\hat{r}_1 = -\hat{r}_2 = -b^2,
\\
\bullet \quad &
\rho =  \frac{1}{-(2\lambda \mu a_0)} \left(1 + \sqrt{1 + (2\lambda \mu a_0b)^2} \right),
\quad
\rho_{1b1} = \rho_{2b2} 
= \rho/2.
\end{aligned}
\end{equation}
\\
\indent (d) If $(1-\lambda) a + \lambda a_0 <0$ and $(1-\lambda) a - \lambda a_0 \neq 0$, a circling equilibrium exists with corresponding equilibrium values given by
\begin{equation}
\begin{aligned}
\bullet \quad &
\tilde{x} = -1,
\quad 
\hat{r}_1 = -\hat{r}_2 = -\rho_{_+}\rho_{_-} -b^2,
\\
\bullet \quad &
\rho_{_+} = \frac{1}{-\mu\bigl((1-\lambda) a + \lambda a_0\bigr)} \left(1 + \sqrt{1 + \Bigl(\mu\bigl((1-\lambda) a + \lambda a_0\bigr)b\Bigr)^2} \right),
\\
\bullet \quad &
\rho_{_-} = \frac{1}{\mu\bigl((1-\lambda) a - \lambda a_0\bigr)} \left(1 - \sqrt{1 + \Bigl(\mu\bigl((1-\lambda) a - \lambda a_0\bigr)b\Bigr)^2} \right),
\\
\bullet \quad &
\rho = \rho_{_+} + \rho_{_-},
\quad
\rho_{1b1} = \rho_{2b2} 
= (\rho_{_+} - \rho_{_-})/2.
\end{aligned}
\end{equation}
\label{prop:existencePropConfig3Case2}
\end{proposition}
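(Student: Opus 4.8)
The plan is to assemble the proposition from the case analysis carried out in the preceding subsection, sorting the four enumerated cases according to the value of $\tilde{x}$ and the relative magnitudes of the combined parameters. I would start from the equilibrium-restricted dynamics \eqref{eqn:2Agents2BeaconsDynamicsAtEquilibrium}, impose the ansatz $\bar{x}_1 = \bar{x}_2 = 0$ (which makes $\dot{\tilde{x}}$ vanish automatically), and then use the difference equations \eqref{eqn:x1DotDiff} and \eqref{eqn:x1bDotDiff} together with the standing hypothesis $a_0 \neq 0$ to force $\rho_{1b1} = \rho_{2b2}$, and subsequently $\hat{r}_2 = -\hat{r}_1$. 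This collapses the system to the three scalar equations \eqref{eqn:2Agents2Beacons_A0_notZero_but_X0_Zero_V2}, at which point Lemma \ref{lem:tildeXPlusMinusOne} applies and splits the analysis cleanly into the two branches $\tilde{x} = 1$ and $\tilde{x} = -1$.

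For the branch $\tilde{x} = 1$, I would solve the first two equations of \eqref{eqn:2Agents2Beacons_A0_notZero_but_X0_Zero_V2} to extract $\rho_{1b1}$, which naturally produces the quantity $\Phi$ and the identity $\rho_{1b1} = (\lambda a_0/((1-\lambda)a))\Phi$; substituting back into the remaining equations yields a quadratic in $\rho$ with roots $\rho = 2(-\Phi \pm b)$, together with $\hat{r}_1 = 2b(-b \pm \Phi)$ where the sign is slaved to the one chosen for $\rho$. The existence bookkeeping then comes from imposing the range constraint \eqref{eqn:rUnitb2Unit} (equivalent here to \eqref{eqn:rUnitb1Unit}), which after simplification is the clean inequality $|(1-\lambda)a/(\lambda a_0)| < 1$; this forces $(1-\lambda)^2 a^2 - \lambda^2 a_0^2 < 0$ and, since $\rho_{1b1} > 0$ must share the sign of this denominator, also $a_0 < 0$. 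Imposing $\rho > 0$ then separates case (a) ($a<0$, hence $\Phi>0$, so only the $+$ branch can be positive and that requires $\Phi < b$: a single equilibrium) from case (b) ($a>0$, hence $\Phi<0$: the $\pm$ branches now behave differently and one obtains the stated pair).

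For the branch $\tilde{x} = -1$, I would introduce $\tilde{a}_0 = \lambda\mu a_0$ and $\tilde{a} = (1-\lambda)\mu a$, clear denominators to get \eqref{eqn:2Agents2Beacons_A0_notZero_but_X0_Zero_V3_eqn1}--\eqref{eqn:2Agents2Beacons_A0_notZero_but_X0_Zero_V3_eqn3}, take their sum and difference, and pass to the variables $\rho_{_+} = \rho/2 + \rho_{1b1}$, $\rho_{_-} = \rho/2 - \rho_{1b1}$. After ruling out the degenerate possibilities (which contradict $\tilde{a}_0$, $b$, $\rho_{_+}$ all being nonzero) one gets $\hat{r}_1 = -\rho_{_+}\rho_{_-} - b^2$, and substitution decouples the remaining relations into the two quadratics \eqref{eqn:2Agents2Beacons_A0_notZero_but_X0_Zero_V6_eqn1}--\eqref{eqn:2Agents2Beacons_A0_notZero_but_X0_Zero_V6_eqn2}. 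The range constraint \eqref{eqn:rUnitb2Unit} forces $\rho_{_+}^2 > b^2 > \rho_{_-}^2$, which makes the first quadratic solvable only when $\tilde{a}_0 + \tilde{a} < 0$ and pins down $\rho_{_+}$; for $\rho_{_-}$ one distinguishes $\tilde{a}_0 = \tilde{a}$, i.e. $(1-\lambda)a = \lambda a_0$, giving $\rho_{_-} = 0$ and hence case (c) with $\rho_{1b1} = \rho/2$ and $\hat{r}_1 = -b^2$, from $\tilde{a}_0 \neq \tilde{a}$, giving the admissible root of the second quadratic and hence case (d). Translating back through $\rho = \rho_{_+} + \rho_{_-}$ and $\rho_{1b1} = (\rho_{_+} - \rho_{_-})/2$ produces the displayed formulas.

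I expect the main obstacle to be the existence and sign bookkeeping rather than the algebra: in the $\tilde{x} = 1$ branch one must verify that \eqref{eqn:rUnitb2Unit} really does collapse to $|(1-\lambda)a/(\lambda a_0)| < 1$, and that this together with $\rho>0$ and $\rho_{1b1}>0$ yields exactly the stated case split without a spurious extra combination --- the key structural point being that the sign in $\hat{r}_1 = 2b(-b\pm\Phi)$ is forced to match the sign in $\rho = 2(-\Phi\pm b)$, so the four apparent $\pm$ combinations reduce to two. The analogous care is needed in the $\tilde{x}=-1$ branch to confirm that the degenerate subcase $\tilde{a}_0 - \tilde{a} = 0$ is correctly isolated as case (c). The remaining work --- reconstructing the full Frenet frames and the actual three-dimensional positions from these shape-space equilibrium values --- is routine given the reduction results cited earlier in the paper.
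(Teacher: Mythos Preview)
Your proposal is correct and follows essentially the same route as the paper: the paper's proof is literally ``Follows from previous discussion,'' and that discussion is exactly the derivation you outline --- set $\bar{x}_1=0$, use \eqref{eqn:x1DotDiff}--\eqref{eqn:x1bDotDiff} with $a_0\neq 0$ to force $\rho_{1b1}=\rho_{2b2}$ and $\hat r_2=-\hat r_1$, invoke Lemma~\ref{lem:tildeXPlusMinusOne} to split on $\tilde x=\pm 1$, then for $\tilde x=1$ solve through $\Phi$ and read off cases (a)/(b) from the constraint $|(1-\lambda)a/(\lambda a_0)|<1$ together with $\rho,\rho_{1b1}>0$, and for $\tilde x=-1$ pass to $\rho_{_+},\rho_{_-}$, eliminate $\hat r_1$, and split on whether $\tilde a_0-\tilde a$ vanishes for cases (c)/(d). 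Your identification of the sign-bookkeeping as the only real hazard, and of the slaved $\pm$ signs reducing four apparent branches to two, matches the paper's treatment precisely.
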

\begin{proof}
Follows from previous discussion.
\end{proof}
%


\noindent
{\bf Remark:} In each of the cases above, the third component of $r_1$ is given by $\hat{r}_1/(2b)$, and therefore the $\hat{r}_1$ values indicate the ``vertical'' displacement of the planes of orbit. Radii for the circling orbits can be determined by projecting the vector $\mathbf{r}_{1b1}$ onto the ``x-y'' plane (i.e. the plane normal to $\hat{\bf b}$ which passes through the origin) which yields the expression
\begin{equation}
    radius = \sqrt{\rho_{1b1}^2 - \left(\frac{\hat{r}_1 + 2 b^2}{2b} \right)^2}.
\end{equation}
Representative trajectories for the equilibria described in the last two bullets of \textbf{Proposition \ref{prop:existencePropConfig3Case2}} are depicted in Figure \ref{fig:TwoBeaconCrossStackCircle}.

\begin{figure}
\centering
\begin{subfigure}{.5\textwidth}
  \centering
  \includegraphics[width=.9\linewidth]{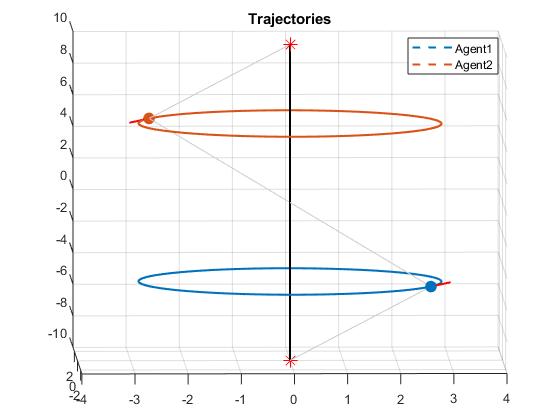}
  \label{fig:sub1A}
\end{subfigure}%
\begin{subfigure}{.5\textwidth}
  \centering
  \includegraphics[width=.9\linewidth]{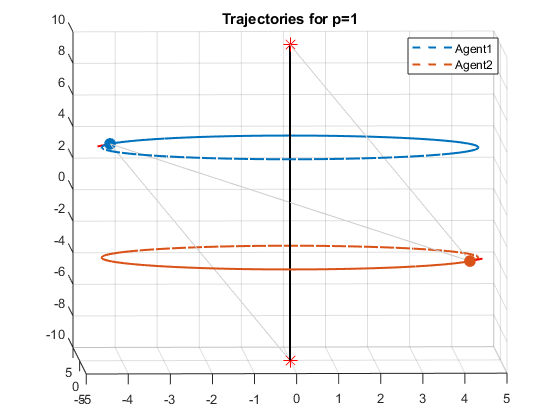}
  \label{fig:sub2A}
\end{subfigure}
\caption{These figures depict circling equilibria corresponding to the last two bullets of \textbf{Proposition \ref{prop:existencePropConfig3Case2}} with beacon positioning $b=10$. The trajectories in the figure on the left were generated using control parameters $a_0=-0.707$, $a = -0.707$, $\lambda = 0.5$, and $\mu = 1$, and correspond to bullet (c) of \textbf{Proposition \ref{prop:existencePropConfig3Case2}}. The trajectories in the figure on the right were generated using control parameters $a_0=0.707$, $a = -0.588$, $\lambda = 0.35$, and $\mu = 1$, and correspond to bullet (d) of the proposition.}
\label{fig:TwoBeaconCrossStackCircle}
\end{figure}
\noindent
{\bf Remark:} It is important to note that \textbf{Proposition \ref{prop:existencePropConfig3Case2}} does not provide an exhaustive characterization of circling equilibria for configuration III, in that it provides sufficient (but not necessary) conditions for existence. Numerical simulations suggest an additional type of circling equilibria for this configuration, for which $\bar{x}_1 \neq 0$ and the midpoint between the agents moves on a separate circling trajectory around the beacon axis. Characterization of this type of circling equilibria will be the focus of future work.

\section{Conclusion}
In this work we have proposed a new control law which references multiple targets and relies only on bearing measurements. Closed-loop shape dynamics were formulated for the multiple configuration possibilities for 1-2 agents referencing 1-2 fixed beacons, and these shape dynamics were analyzed to determine existence conditions and steady-state characterization for circling relative equilibria. In each case, we demonstrated that the radius of the circling trajectory and the vertical separation of the mobile agents could be prescribed through the choice of control parameters. This decentralized control method could be used to coordinate the motions of autonomous vehicles in a circular stationing pattern with minimal required sensing (e.g. underwater vehicles sensing the relative bearing to sound sources serving as beacons).  

There exists several clear paths for future research endeavors related to this control formulation. First, it will be important to explore stability characteristics of these steady-state behaviors to determine parameter requirements to ensure that the circling equilibria are attractive. Numerical simulations suggest that for most of the circling equilibria described in this paper there exists a range of parameter values for which the circling trajectories are attractive, and future work will prove this through mathematical analysis. Additionally, the ideas in this paper can be extended in new directions by considering systems with more than two mobile agents, each referencing two (or more) targets.

{\bf Funding:} K. S. Galloway was supported by funding from Naval Research Laboratory and from the U.S. Naval Academy.

{\bf Acknowledgement:} The authors wish to thank Prof. Levi DeVries for helpful discussions and feedback on this paper.
\bibliography{RefsForPRS2019}
\bibliographystyle{iclr2020_conference}

\appendix
\appendixpage

\section{Supplemental Calculations}

\counterwithout{equation}{section}
\setcounter{equation}{0}
\renewcommand\theequation{A.\arabic{equation}}

These calculations are provided for a more detailed proof of several of the claims and propositions. 
%
\subsection{Derivation of shape dynamics \eqref{eqn:shapeDynamicsForOneAgentTwoBeacons} for Configuration I}
First, we can calculate 
\begin{equation}
    \dot{\rho}_{1bi} = \frac{d}{dt}\left(\left|{\bf r}_{1bi}\right| \right)
    = \dot{\bf r}_{1bi} \cdot \rOneBiUnit
    = \mathbf{x}_1 \cdot \rOneBiUnit
    = \bar{x}_{1bi}, \; i=1,2.
\end{equation}
Next, it can be shown that  
\begin{equation}
    \frac{d}{dt}\left(\rOneBiUnit \right) = \frac{1}{\left|{\bf r}_{1bi}\right|}\left(\mathbf{x}_1 - \bar{x}_{1bi} \rOneBiUnit\right), \; i=1,2,
\end{equation}
and therefore we have 
\begin{align}
    \dot{\bar{x}}_{1b1} &= \dot{\bf x}_1 \cdot \rOneBUnit + \mathbf{x}_1 \cdot \frac{d}{dt}\left(\rOneBUnit \right) \nonumber \\
    &=
    \left[-(1-\lambda)\mu (\bar{x}_{1b2} - a_{b2})\left(\rOneBTwoUnit - \bar{x}_{1b2}{\bf x}_1\right)- \lambda \mu (\bar{x}_{1b1} - a_{b1})\left(\rOneBUnit - \bar{x}_{1b1}{\bf x}_1\right) \right] \cdot \rOneBUnit \nonumber \\
    &\qquad + \mathbf{x}_1 \cdot \left[\frac{1}{\left|{\bf r}_{1b1}\right|}\left(\mathbf{x}_1 - \bar{x}_{1b1} \rOneBUnit\right) \right] \nonumber \\
    &=
    -(1-\lambda)\mu (\bar{x}_{1b2} - a_{b2})\left(\rOneBTwoUnit\cdot \rOneBUnit - \bar{x}_{1b1}\bar{x}_{1b2}\right)- \lambda \mu (\bar{x}_{1b1} - a_{b1})\left(1 - \bar{x}_{1b1}^2\right) + \frac{1}{\rho_{1b1}}\left(1 - \bar{x}_{1b1}^2 \right) \nonumber \\
    &= -(1-\lambda)(\mu (\bar{x}_{1b2} - a_{b2}))\left(\rOneBUnit \cdot \rOneBTwoUnit - \bar{x}_{1b1}\bar{x}_{1b2} \right)  -\left(\lambda\mu(\bar{x}_{1b1}-a_{b1}) - \frac{1}{\rho_{1b1}}\right)\Bigl(1 - \bar{x}_{1b1}^2 \Bigr),
\end{align}
and
\begin{align}
    \dot{\bar{x}}_{1b2} &= \dot{\bf x}_1 \cdot \rOneBTwoUnit + \mathbf{x}_1 \cdot \frac{d}{dt}\left(\rOneBTwoUnit \right) \nonumber \\
    &=
    \left[-(1-\lambda)\mu (\bar{x}_{1b2} - a_{b2})\left(\rOneBTwoUnit - \bar{x}_{1b2}{\bf x}_1\right)- \lambda \mu (\bar{x}_{1b1} - a_{b1})\left(\rOneBUnit - \bar{x}_{1b1}{\bf x}_1\right) \right] \cdot \rOneBTwoUnit \nonumber \\
    &\qquad + \mathbf{x}_1 \cdot \left[\frac{1}{\left|{\bf r}_{1b2}\right|}\left(\mathbf{x}_1 - \bar{x}_{1b2} \rOneBTwoUnit\right) \right] \nonumber \\
    &=
    -(1-\lambda)\mu (\bar{x}_{1b2} - a_{b2})\left(1 - \bar{x}_{1b1}^2\right)- \lambda \mu (\bar{x}_{1b1} - a_{b1})\left(\rOneBTwoUnit\cdot \rOneBUnit - \bar{x}_{1b1}\bar{x}_{1b2}\right) + \frac{1}{\rho_{1b2}}\left(1 - \bar{x}_{1b2}^2 \right) \nonumber \\
    &= -\lambda\mu (\bar{x}_{1b1} - a_{b1}))\left(\rOneBUnit \cdot \rOneBTwoUnit - \bar{x}_{1b1}\bar{x}_{1b2} \right)  -\left((1-\lambda)\mu(\bar{x}_{1b2}-a_{b2}) - \frac{1}{\rho_{1b2}}\right)\Bigl(1 - \bar{x}_{1b2}^2 \Bigr).
\end{align}
%
\subsection{Derivation of shape dynamics \eqref{eqn:2AgentShapeDynamicsSet1} for Configuration II and III}
The derivation of the dynamics for $\rho_{1b1}$, $\rho_{2b2}$, $\bar{x}_{1b1}$, and $\bar{x}_{2b2}$ are very similar to the derivation presented for \eqref{eqn:shapeDynamicsForOneAgentTwoBeacons}, and therefore we won't repeat them here. By direct calculation, we have
\begin{equation}
    \dot{\rho} = \frac{d}{dt}\left(\left|{\bf r}\right| \right)
    = \dot{\bf r} \cdot \rUnit
    = (\mathbf{x}_1 - \mathbf{x}_2) \cdot \rUnit
    = \bar{x}_{1} + \bar{x}_{2}.
\end{equation}
Then, noting that
\begin{equation}
    \frac{d}{dt}\left(\rUnit \right) = \frac{1}{\left|{\bf r}\right|}\left[\dot{\bf r} -  \left(\dot{\bf r} \cdot \rUnit \right)\rUnit\right]
    =
    \frac{1}{\rho}\left({\bf x}_1 - {\bf x}_2 -  \left(\bar{x}_1 + \bar{x}_2 \right)\rUnit\right), 
\end{equation}
starting from \eqref{eqn:closedLoopDynamics}, we have
\begin{align}
    \dot{\bar{x}}_{1} &= 
    \dot{\bf x}_1 \cdot \rUnit + \mathbf{x}_1 \cdot \frac{d}{dt}\left(\rUnit \right) \nonumber \\
    &=
    \left[-(1-\lambda)\mu (\bar{x}_1 - a)\left(\rUnit - \bar{x}_1{\bf x}_1\right)   
- \lambda \mu (\bar{x}_{1b1} - a_0)\left(\rOneBUnit - \bar{x}_{1b1} {\bf x}_1\right)\right] \cdot \rUnit \nonumber \\
&\qquad + \mathbf{x}_1 \cdot \left[\frac{1}{\rho}\left({\bf x}_1 - {\bf x}_2 -  \left(\bar{x}_1 + \bar{x}_2 \right)\rUnit\right)\right] 
    \nonumber \\
    &= -(1-\lambda)\mu (\bar{x}_1 - a) \left(1 - \bar{x}_1^2 \right)  -\lambda\mu(\bar{x}_{1b1}-a_0)\left(\rOneBUnit \cdot \rUnit - \bar{x}_{1b1}\bar{x}_1 \right) + \frac{1}{\rho} \Bigl(1 - \tilde{x} - \bar{x}_1^2 - \bar{x}_1 \bar{x}_{2} \Bigr),
\end{align}
and
\begin{align}
\dot{\bar{x}}_{2} &= 
    \dot{\bf x}_2 \cdot \left(-\rUnit\right) + \mathbf{x}_2 \cdot \frac{d}{dt}\left(-\rUnit \right) \nonumber \\
    &=
    \left[-(1-\lambda)\mu (\bar{x}_2 - a)\left(-\rUnit - \bar{x}_2{\bf x}_2\right) 
- \lambda \mu (\bar{x}_{2b2} - a_0)\left(\rTwoBUnit - \bar{x}_{2b2} {\bf x}_2\right)\right] \cdot \left(-\rUnit\right) \nonumber \\
&\qquad - \mathbf{x}_2 \cdot \left[\frac{1}{\rho}\left({\bf x}_1 - {\bf x}_2 -  \left(\bar{x}_1 + \bar{x}_2 \right)\rUnit\right)\right] 
    \nonumber \\
    &= -(1-\lambda)\mu (\bar{x}_2 - a) \left(1 - \bar{x}_2^2 \right)  -\lambda\mu(\bar{x}_{2b2}-a_0)\left(-\rTwoBUnit \cdot \rUnit - \bar{x}_{2b2}\bar{x}_2 \right) + \frac{1}{\rho} \Bigl(1 - \tilde{x} - \bar{x}_2^2 - \bar{x}_1 \bar{x}_{2} \Bigr).
\end{align}

Lastly, we can derive the dynamics for $\tilde{x}$ by calculating
\begin{align}
    \dot{\tilde{x}} 
    &= \dot{\bf x}_{1} \cdot {\bf x}_{2} + {\bf x}_{1} \cdot \dot{\bf x}_{2} \nonumber \\
    &= \left[-(1-\lambda)\mu (\bar{x}_1 - a)\left(\rUnit - \bar{x}_1{\bf x}_1\right)   
- \lambda \mu (\bar{x}_{1b1} - a_0)\left(\rOneBUnit - \bar{x}_{1b1} {\bf x}_1\right)\right] \cdot {\bf x}_{2} \nonumber \\
&\qquad + {\bf x}_{1} \cdot \left[-(1-\lambda)\mu (\bar{x}_2 - a)\left(-\rUnit - \bar{x}_2{\bf x}_2\right) 
- \lambda \mu (\bar{x}_{2b2} - a_0)\left(\rTwoBUnit - \bar{x}_{2b2} {\bf x}_2\right)\right] \nonumber \\
&= (1-\lambda)\mu (\bar{x}_1 - a) \left(\bar{x}_{2} + \tilde{x} \bar{x}_1\right)  + (1-\lambda)\mu (\bar{x}_2 - a) \left(\bar{x}_{1} + \tilde{x} \bar{x}_2\right)  \nonumber \\
&\qquad -\lambda\mu(\bar{x}_{1b1}-a_0)\left({\bf x}_2 \cdot \rOneBUnit  - \bar{x}_{1b1}\tilde{x}\right) 
-\lambda\mu(\bar{x}_{2b2}-a_0)\left({\bf x}_1 \cdot \rTwoBUnit  - \bar{x}_{2b2}\tilde{x}\right).
\end{align}

%
%
\subsection{Proof of {\bf Proposition \ref{prop:existenceProp_TwoAgentsOneBeacon_a0_nonzero}}}
\label{sec:ProofOfProposition42}
%
%
%
It directly follows from \eqref{eqn:2AgentsOneBeaconDynamicsAtEquilibrium} that $\dot{\tilde{x}}=0$ if $\bar{x}_1 = 0$, and in that situation we can express the closed loop dynamics on the nullclines $\dot{\rho} = \dot{\rho}_{1b1} = \dot{\rho}_{2b2} = \dot{\tilde{x}}=0$ as
\begin{align}
\text{\normalsize{ $
\dot{\bar{x}}_1
$}}
&
\text{\normalsize{ $
= (1-\lambda)\mu a + \lambda\mu a_0\left(\frac{\rho_{1b1}^2 +\rho^2 - \rho_{2b2}^2 }{2\rho \rho_{1b1}} \right) + \frac{1}{\rho} \Bigl(1 - \tilde{x}\Bigr)
$}}
,  \nonumber \\
\text{\normalsize{ $
\dot{\bar{x}}_2
$}}
&
\text{\normalsize{ $
= (1-\lambda)\mu a + \lambda\mu a_0\left(\frac{\rho_{2b2}^2 +\rho^2 - \rho_{1b1}^2 }{2\rho \rho_{2b2}} \right) + \frac{1}{\rho} \Bigl(1 - \tilde{x}\Bigr)
$}}, 
\label{eqn:2AgentDynamicsAtEquilibriumCommonCBParameter_x1_zero}
\\
\text{\normalsize{ $
\dot{\bar{x}}_{1b1} 
$}}
&
\text{\normalsize{ $
= 
(1-\lambda) \mu a\left(\frac{\rho_{1b1}^2 + \rho^2 - \rho_{2b2}^2}{2\rho \rho_{1b1}} \right)  
+ \lambda\mu a_0 + \frac{1}{\rho_{1b1}}
$}}
, \nonumber \\
\text{\normalsize{ $
\dot{\bar{x}}_{2b2} 
$}}
&
\text{\normalsize{ $
= 
(1-\lambda) \mu a  \left(\frac{\rho_{2b2}^2 + \rho^2 - \rho_{1b1}^2}{2\rho \rho_{2b2}} \right)  
+ \lambda\mu a_0 + \frac{1}{\rho_{2b2}}
$}}. \nonumber
\end{align}

We note that taking the difference of $\dot{\bar{x}}_1-\dot{\bar{x}}_2$ yields
\begin{align}
\label{eqn:x1DotDiffSupp}
\dot{\bar{x}}_1-\dot{\bar{x}}_2 
%
%
&= 
\frac{\lambda\mu a_0(\rho_{1b1}-\rho_{2b2}) }{2\rho\rho_{1b1}\rho_{2b2}} \Bigl(\left(\rho_{1b1} + \rho_{2b2} \right)^2 -\rho^2 \Bigr),
\end{align}
and similar calculations lead to 
\begin{align}
\label{eqn:x1bDotDiff}
\dot{\bar{x}}_{1b1}-\dot{\bar{x}}_{2b2} 
&= 
(1-\lambda) \mu a \left(\frac{\rho_{1b1}-\rho_{2b2}}{2\rho\rho_{1b1}\rho_{2b2}}\right) 
 \Bigl( (\rho_{1b1} + \rho_{2b2})^2 -\rho^2 \Bigr) + \frac{\rho_{2b2}-\rho_{1b1}}{\rho_{1b1}\rho_{2b2}}.
\end{align}
Then by setting both \eqref{eqn:x1DotDiffSupp} and \eqref{eqn:x1bDotDiff} equal to zero, i.e. by setting the derivatives of $\bar{x}_2$ and $\bar{x}_{2b2}$ identical to the derivatives of $\bar{x}_1$ and $\bar{x}_{1b1}$ respectively, we can conclude that $\rho_{2b2}$ must be equal to $\rho_{1b1}$ at an equilibrium. Substituting this equivalence into \eqref{eqn:2AgentDynamicsAtEquilibriumCommonCBParameter_x1_zero}, we can further conclude that the following conditions must hold true at an equilibrium
\begin{align}
(1-\lambda)\mu a + \lambda\mu a_0 \left( \frac{\rho}{2\rho_{1b1}} \right) + \frac{1 - \tilde{x}}{\rho} 
&= 0,
\label{eqn:equilibriumDynamicsTwoAgentA0NonZeroTerm_1}
\\
(1-\lambda) \left(\frac{\mu a \rho}{2 \rho_{1b1}} \right)  + \lambda\mu a_0 + \frac{1}{\rho_{1b1}}
&= 0.
\label{eqn:equilibriumDynamicsTwoAgentA0NonZeroTerm_2}
\end{align}

\textbf{If the two agents and the beacon are collinear}, then the constraint $\rho_{1b1} = \rho_{2b2}$ implies that $\rho = 2\rho_{1b1}$. Substituting this equivalence into 
\eqref{eqn:equilibriumDynamicsTwoAgentA0NonZeroTerm_1} and \eqref{eqn:equilibriumDynamicsTwoAgentA0NonZeroTerm_2} yields
\begin{align}
(1-\lambda)\mu a + \lambda\mu a_0  + \left(\frac{1 - \tilde{x}}{\rho} \right)
&= 0,
\label{eqn:equilibriumDynamicsTwoAgentA0NonZeroTerm_1_subs}
\\
(1-\lambda)\mu a + \lambda\mu a_0  + \frac{1}{\rho_{1b1}}
&= 0,
\label{eqn:equilibriumDynamicsTwoAgentA0NonZeroTerm_2_subs}
\end{align}
from which it follows that 
\begin{align}
\label{eqn:rhoB11TwoAgentsOneBeaconProof}
\rho_{1b1} &= \frac{\lambda}{-\mu \bigl((1-\lambda)a + \lambda a_0\bigr)},
\end{align}
which is a valid solution if and only if $(1-\lambda)a + \lambda a_0 < 0$. Substitution of $\rho = 2\rho_{1b1}$ (with $\rho_{1b1}$ given by \eqref{eqn:rhoB11TwoAgentsOneBeaconProof}) into \eqref{eqn:equilibriumDynamicsTwoAgentA0NonZeroTerm_1_subs} yields $\tilde{x}=-1$.

\textbf{If the agents and beacon are not collinear}, then the analysis in \cite{KSG_BD_ACC_2018} demonstrates that the equilibrium constraint $\bar{x}_1=\bar{x}_2=0$ implies that $\tilde{x}$ must be either $1$ or $-1$ at such an equilibrium.

\underline{If $\tilde{x}=1$}, then \eqref{eqn:equilibriumDynamicsTwoAgentA0NonZeroTerm_1} allows us to express $\rho$ as
\begin{equation}
\rho 
= 
-2 \left(\frac{1-\lambda}{\lambda} \right)\left(\frac{a}{a_0} \right)\rho_{1b1}.
\label{Existence_4.2_Cond1}
\end{equation}
As both $\rho$ and $\rho_{1b1}$ must be positive, \eqref{Existence_4.2_Cond1} is meaningful if and only if $a/a_0 < 0$. Also, since $\rho_{1b1} = \rho_{2b2}$, substituting \eqref{Existence_4.2_Cond1} into constraint \eqref{eqn:rUnitb1Unit} yields 
\begin{align}
\left(\frac{1-\lambda}{\lambda}\right) \left(\frac{-a}{a_0}\right) < 1,
\label{eqn:prop42partBconstraint}
\end{align}
with the strict inequality resulting from the fact that we have assumed that the agents and beacon are not collinear. The combination of \eqref{eqn:prop42partBconstraint} with $a/a_0 < 0$ yields two possibilities:
\begin{itemize}
\item Case 1: $a_0>0$, $a<0$, $(1-\lambda) a + \lambda a_0 > 0$;
\item Case 2: $a_0<0$, $a>0$, $(1-\lambda) a + \lambda a_0 < 0$.
\end{itemize}
Also, substitution of \eqref{Existence_4.2_Cond1} into \eqref{eqn:equilibriumDynamicsTwoAgentA0NonZeroTerm_2} leads to
\begin{align}
\rho_{1b1} 
= \frac{\lambda a_0}{\mu\Bigl((1-\lambda)^2 a^2 - \lambda^2 a_0^2\Bigr)},
\end{align}
which yields a valid solution  (i.e. $\rho_{1b1}>0$) if and only if 
\begin{equation}
\label{eqn:Prop2Constraint}
a_0\Bigl((1-\lambda) a - \lambda a_0\Bigr)\Bigl((1-\lambda) a + \lambda a_0\Bigr) > 0. 
\end{equation}
It is straightforward to verify that Case 2 (but not Case 1) satisfies \eqref{eqn:Prop2Constraint}, leading to part (b) of the proposition.

On the other hand, \underline{if $\tilde{x}=-1$}, then \eqref{eqn:equilibriumDynamicsTwoAgentA0NonZeroTerm_1}-\eqref{eqn:equilibriumDynamicsTwoAgentA0NonZeroTerm_2} simplifies to
\begin{align}
\frac{1}{2\rho\rho_{1b1}}\Bigl[2(1-\lambda)\mu a \rho \rho_{1b1}  + \lambda \mu a_0 \rho^2 + 4\rho_{1b1}\Bigr]
&=
0,
\label{Existence_4.2_Cond2}
\\
\frac{1}{2\rho_{1b1}}\Bigl[(1-\lambda)\mu a \rho  + 2 + 2\lambda\mu a_0 \rho_{1b1} \Bigr]
&=
0.
\label{Existence_4.2_Cond3}
\end{align}
However, it follows from \eqref{Existence_4.2_Cond2}-\eqref{Existence_4.2_Cond3} that at an equilibrium we must have $\rho = 2\rho_{1b1}$, i.e. this corresponds to the collinear configuration addressed earlier in part (a) of the proposition. (Note that the condition $\tilde{x}=\pm 1$ is a necessary condition of the agents and beacon being in a collinear configuration, but it is not sufficient.) This completes the proof.
%
%

%
\subsection{Proof of {\bf Lemma \ref{lem:tildeXPlusMinusOne}}}
First, suppose that $\hat{\bf b}$ is not parallel to $\mathbf{r}$, i.e. $\rUnit \cdot \bUnit \neq \pm 1$. Then $\bar{x}_{1} = 0 = \bar{x}_{2}$ implies that ${\bf x}_1 \cdot \rUnit = 0$ and ${\bf x}_2 \cdot \rUnit = 0$, and $\hat{x}_{1} = 0 = \hat{x}_{2}$ implies that ${\bf x}_1 \cdot \bUnit = 0$ and ${\bf x}_2 \cdot \bUnit = 0$. Thus we have 
\begin{equation}
    {\bf x}_1 \cdot \left(\rUnit \times \bUnit\right) = \pm 1, \qquad 
    {\bf x}_2 \cdot \left(\rUnit \times \bUnit\right) = \pm 1,
\end{equation}
and since $\hat{\bf b}$ is not parallel to $\mathbf{r}$, it must be that $\tilde{x} = {\bf x}_1 \cdot {\bf x}_2 = \pm 1$.

Now suppose $\hat{\bf b}$ is parallel to $\mathbf{r}$. In a manner similar to the first part of the proof, we can use the assumptions $\bar{x}_{1b} = 0 = \bar{x}_{2b}$ and $\hat{x}_{1} = 0 = \hat{x}_{2}$ to arrive at
\begin{equation}
\label{eqn:crossProductTerms}
    {\bf x}_1 \cdot \left(\rOneBUnit \times \bUnit\right) = \pm 1, \qquad 
    {\bf x}_2 \cdot \left(\rTwoBUnit \times \bUnit\right) = \pm 1.
\end{equation}
From \eqref{eqn:closureConstraint} we have $\mathbf{r}_{2b2} = \mathbf{r}_{1b1} - \mathbf{r} - \hat{\bf b}$, and substitution into the second equation in \eqref{eqn:crossProductTerms} yields
\begin{align}
    \left(\frac{1}{\rho_{2b2}} \right) \left[{\bf x}_2 \cdot \left(\left(\mathbf{r}_{1b1} - \mathbf{r} - \hat{\bf b}\right) \times \bUnit\right)\right]
    =
    \left(\frac{\rho_{1b1}}{\rho_{2b2}} \right) \left[{\bf x}_2 \cdot \left(\rOneBUnit \times \bUnit\right)\right]
    =
    \pm 1,
\end{align}
where we have used the assumption that $\hat{\bf b}$ is parallel to $\mathbf{r}$ and therefore the cross product of the two vectors is zero. Observing that ${\bf x}_2$, $\rOneBUnit$, and $\bUnit$ are all unit vectors, we note that the term inside the brackets must have an absolute value of $1$. Therefore it must hold that $\rho_{1b1} = \rho_{2b2}$ and 
\begin{equation}
\label{eqn:crossProductTerm2}
    {\bf x}_2 \cdot \left(\rOneBUnit \times \bUnit\right) = \pm 1.
\end{equation}
Then since since $\hat{\bf b}$ is not parallel to $\mathbf{r}_{1b1}$, \eqref{eqn:crossProductTerm2} together with the first equation in \eqref{eqn:crossProductTerms} implies that $\tilde{x} = \pm 1$.

\end{document}